\newcommand{\xmark}{\ding{55}}
  \renewcommand\AB@authnote[1]{\textsuperscript{\normalfont\@fnsymbol{#1}}}
  \renewcommand\AB@affilnote[1]{\textsuperscript{\normalfont\@fnsymbol{#1}}}
  \renewcommand\AB@affilsepx{\hspace{2em}}
\newtheorem{definition}{Definition}
\newcommand{\name}{Shelby}
\newcommand{\userfile}{Blob}
\newcommand{\sectionuserfile}{Chunkset}
\newcommand{\spunitofstorage}{Chunk}
\newcommand{\proofforaudit}{Sample}
\newcommand{\sectionuserfilesize}{10~MiB}
\newcommand{\spunitofstoragesize}{1~MiB}
\newcommand{\proofforauditsize}{1~KiB}
\title{\name: Decentralized Storage Designed to Serve}
\author[1]{Guy Goren}
\author[1]{Andrew Hariri}
\author[2]{Timothy D. R. Hartley}
\author[2]{Ravi Kappiyoor}
\author[1]{Alexander Spiegelman}
\author[2]{David Zmick}
\affil[1]{Aptos Labs}
\affil[2]{Jump Crypto}
\date{}
\begin{document}
\maketitle

\begin{abstract}
    Existing decentralized storage protocols fall short of the service
    required by real-world applications. Their throughput, latency, cost-effectiveness, and availability are insufficient for demanding workloads such as video streaming, large-scale data analytics, or AI training. As a result, Web3 data-intensive applications are predominantly dependent on centralized infrastructure.
    
    \name{} is a high-performance decentralized storage protocol designed to meet demanding needs. It achieves fast, reliable access to large volumes of data while preserving decentralization guarantees. The architecture reflects lessons from Web2 systems: it separates control and data planes, uses erasure coding with low replication overhead and minimal repair bandwidth, and operates over a dedicated backbone connecting RPC and storage nodes. Reads are paid, which incentivizes good performance. \name{} also introduces a novel auditing protocol that provides strong cryptoeconomic guarantees without compromising performance, a common limitation of other decentralized solutions. The result is a decentralized system that brings Web2-grade performance to production-scale, read-intensive Web3 applications.%
    \footnote{Disclaimer: For convenience and consistency, this paper describes Shelby in the present tense.  As a result, this paper includes certain forward-looking statements based on the Shelby design and expectations of the authors as of the date of this paper, but please note that future facts or operation of the Shelby protocol may differ from the content expressed herein.}
\end{abstract}

\section{Introduction}
\label{sec:Introduction}

Web3 has made major strides in decentralized payments and trading use cases.  Despite this progress, the ecosystem still lacks the core infrastructure needed to scale: decentralized cloud services. Without high-performance alternatives to cloud storage, networking, and compute, Web3 applications hit hard scalability limits, constraining both what they can do and who they can serve.

A critical missing piece is \textit{hot storage}: infrastructure that can support low-latency, high-throughput reads. Use cases like video streaming, real-time collaboration, and AI inference all rely on fast, reliable access to large datasets. While protocols like Filecoin~\cite{filecoin}, Arweave~\cite{arweave}, Walrus~\cite{walruswhitepaper} and Celestia~\cite{celestiareplication} support archival or cold storage, none can meet the performance requirements of dynamic, read-heavy applications.

The barrier is not just technical - it is economic. Reads are bandwidth-intensive, latency-sensitive, and difficult to meter in a decentralized system. Without a viable economic model for rewarding high-performance reads, most systems avoid the problem entirely, focusing instead on infrequent writes or cold data storage. As a result, truly decentralized applications that depend on fast reads continue to rely on centralized infrastructure.

To ground this challenge in a real-world context, consider video streaming - a canonical example of a read-heavy workload where latency and throughput are paramount. Supporting high-quality 4K video playback requires sustained throughput of at least 40 Mbps, typically delivered in 10~MiB chunks with minimal startup delay. No existing decentralized storage network satisfies this read-throughput requirement (see \cref{tab:read-performance} for a partial list).

Beyond media, modern AI workflows such as retrieval-augmented generation (RAG) systems depend on low-latency access to diverse, high-quality datasets. These models are rapidly moving from research to real-world use: clinicians training diagnostic tools on medical images, governments building domain-specific assistants, and robotics teams teaching home agents new tasks. The expected utility of these systems is large, but realizing it requires two conditions: data must be easy to monetize with fair compensation, and access to that data must be fast and efficient. A data explosion is underway, and while Web3 is uniquely positioned to preserve user control, current infrastructure cannot meet the performance and economic demands of this shift.


\name{} is a decentralized storage network purpose-built for high-performance reads. It combines cryptoeconomic incentives with engineering principles drawn from high-performance Web2 systems to deliver fast, reliable access to data without sacrificing decentralization. \name{} combines a novel auditing protocol, efficient erasure coding, and a dedicated fiber network to support Web2-grade performance at Web3 trust levels.
Its architecture establishes a strong game-theoretic equilibrium in which all participants - storage providers, RPC nodes, and clients - are incentivized to behave honestly and serve data quickly. At its core, the system is built on six tightly integrated pillars, each contributing to its decentralized performance and economic sustainability.

\paragraph{Paid reads - aligned performance incentives.} When users pay for a service, they expect it to be high quality. By requiring payment for reads, \name{} aligns the incentives for users and service providers. Everyone in the ecosystem wants more data to flow: users want a great experience streaming high-quality video or accessing large files and service providers want to serve as much data as they can.

\paragraph{Dedicated fiber network - high-performance network infrastructure.} Unlike the public internet - where congestion, variable latency, and unpredictable routing can degrade performance - a dedicated backbone provides \name{} with consistent bandwidth and low-latency guarantees critical for real-time data access. By deploying storage and RPC nodes directly onto this network, \name{} offers users sub-second access latency and reduces bandwidth costs dramatically - two requirements essential for hot storage and applications like streaming and real-time collaboration. Such a network is critical to reach performance comparable to Web2 cloud providers.

\paragraph{Battle-tested software stack - proven infrastructure for low-latency storage systems.}
\name{}’s software stack is built on engineering principles honed through years of experience developing storage and compute systems for Jump Trading Group's high-performance quantitative research infrastructure and Meta’s global-scale platforms. This includes high-performance I/O pipelines, efficient concurrency primitives, and low level code optimizations. The result is a system architecture that can handle bursty workloads, maintain sub-second level responsiveness, and scale horizontally with minimal overhead. 

\paragraph{Efficient Coding Scheme - high durability with low storage overhead.}
A core challenge in decentralized storage is achieving cost efficiency without sacrificing durability. That is, reducing replication overhead while being able to tolerate and efficiently repair significant amounts of data corruption.
To make this viable, \name{} uses Clay codes~\cite{claycode}, an advanced erasure coding scheme that achieves theoretically optimal repair bandwidth (i.e., Minimum Storage Regenerating) without compromising reconstruction capability (i.e., Maximum Distance Separable). This balance is what enables \name{} to operate with a much lower replication factor than typical Web3 storage networks. While many decentralized systems rely on full replication or less efficient erasure codes - often resulting in 5x to 8x overhead - \name{} achieves durability with a replication factor under 2x. This represents a major cost advantage and brings \name{} close to the 1.2x-1.4x replication range common in high-performance Web2 storage systems.

\paragraph{Incentive-Compatible auditing - economic security without performance trade-offs}
\name{}’s auditing protocol is a hybrid design that combines high-frequency internal audits with low-frequency, cryptographically enforced on-chain verification. Storage providers are continuously challenged to prove possession of random data samples, while their audit behaviors are themselves audited on-chain. This “audit-the-auditor” mechanism allows \name{} to significantly reduce the auditing cost by performing most audits internally, while ensuring that no participant can game the system through collusion or apathy. Importantly, this structure is fully incentive-compatible: honest participation maximizes rewards, while misbehavior leads to slashing and revenue loss. This ensures a strong equilibrium where nodes store data reliably, respond to reads quickly, and verify one another in a decentralized yet economically rational way.

\paragraph{Aptos Blockchain - fast and reliable coordination layer.}
\name{} uses the Aptos blockchain as its coordination and settlement layer. Aptos offers high transaction throughput, low finality times, and a resource-efficient execution model, making it an ideal substrate for managing \name{}’s economic logic. All critical state - including storage commitments, audit outcomes, micropayment channel metadata, and system participation - is recorded and enforced via the \name{} smart contract on Aptos. This allows for decentralized governance, verifiable incentives, and strong fault tolerance without compromising scalability.\\

\begin{table}[t!]
\centering
\begin{tabular}{l|cc|cc|cc}
\toprule
\multirow{2}{*}{\textbf{System}} & \multicolumn{2}{c|}{\textbf{Perf. architecture}} & \multicolumn{2}{c|}{\textbf{User experience}} & \multicolumn{2}{c}{\textbf{Access control}} \\
& \begin{tabular}{@{}c@{}}Ded. fiber\\ network \end{tabular} & \begin{tabular}{@{}c@{}}Replication \\ overhead \end{tabular} & \begin{tabular}{@{}c@{}}4K streaming \\ throughput \end{tabular}  & \begin{tabular}{@{}c@{}}Web2 cost \\ competitiveness \end{tabular} & 
\begin{tabular}{@{}c@{}}Incentivized \\ reads \end{tabular} & Decentralized \\
\midrule
AWS S3 & \cellcolor{green!20}\checkmark & \cellcolor{green!20}est. $1.4x$ & \cellcolor{green!20}\checkmark &\cellcolor{green!20}\checkmark & \cellcolor{green!20}\checkmark & \cellcolor{red!20}\xmark \\
GCS & \cellcolor{green!20}\checkmark & \cellcolor{green!20}est. $1.4x$ & \cellcolor{green!20}\checkmark &\cellcolor{green!20}\checkmark & \cellcolor{green!20}\checkmark & \cellcolor{red!20}\xmark \\
Filecoin & \cellcolor{red!20}\xmark & \cellcolor{red!20}$3{-}6x$ & \cellcolor{red!20}\xmark &\cellcolor{green!20}\checkmark & \cellcolor{red!20}\xmark &\cellcolor{green!20}\checkmark \\
Greenfield & \cellcolor{red!20}\xmark & \cellcolor{yellow!20}$2.5x$ & \cellcolor{red!20}\xmark &\cellcolor{red!20}\xmark & \cellcolor{green!20}\checkmark & \cellcolor{green!20}\checkmark \\
Celestia & \cellcolor{red!20}\xmark & \cellcolor{red!20}$4x$ \cite{celestiareplication}& \cellcolor{red!20}\xmark & \cellcolor{red!20}\xmark & \cellcolor{red!20}\xmark & \cellcolor{green!20}\checkmark \\
Walrus & \cellcolor{red!20}\xmark & \cellcolor{red!20}$4.5x$ \cite{walruswhitepaper} & \cellcolor{red!20}\xmark & \cellcolor{red!20}\xmark & \cellcolor{red!20}\xmark & \cellcolor{green!20}\checkmark \\
Arweave & \cellcolor{red!20}\xmark & \cellcolor{red!20}$15x$ \cite{arweavereplication} & \cellcolor{red!20}\xmark & \cellcolor{red!20}\xmark & \cellcolor{red!20}\xmark & \cellcolor{green!20}\checkmark \\
\textbf{\name{}} & \cellcolor{green!20}\checkmark & \cellcolor{green!20}$<2x$ & \cellcolor{green!20}\checkmark &\cellcolor{green!20}\checkmark & \cellcolor{green!20}\checkmark & \cellcolor{green!20}\checkmark \\
\bottomrule
\end{tabular}
\caption{Qualitative comparison of storage systems across performance, user experience, and decentralization properties. \name{} is the only fully decentralized protocol with the performance architecture needed to deliver Web2-grade read experiences at competitive cost.}
\label{tab:read-performance}
\end{table}

\noindent
Building on the earlier video streaming example, we now demonstrate how \name{}’s design enables real-world performance.
A creator uploads a high-resolution video using the \name{} client SDK, which encodes the file with Clay erasure codes and submits a cryptographic commitment to the Aptos blockchain. The client works with an RPC node to distribute encoded chunks across globally dispersed storage providers connected via a dedicated network backbone.
When a viewer presses “play,” the RPC node fetches the necessary chunks, reconstructs the video on the fly, and streams it with low latency. Micropayments flow in real time to compensate both storage and RPC providers. In the background, the auditing protocol ensures that data remains intact, available, and verifiably stored.

Unlike Web2 walled gardens, where creators relinquish control over content and monetization, \name{}-backed streaming dApps enable creator-owned business models. With fast reads and native micropayments, creators can embed custom interstitials, run their own license servers, or enforce playback restrictions via DRM. This unlocks new monetization paths, from direct sponsorships to platform-independent distribution, while infrastructure providers are rewarded proportionally to their contributions.

In short, \name{} brings Web2-grade performance to Web3 infrastructure - not by compromise, but through principled engineering, incentive-aligned economics, and a deep understanding of what it takes to build a cost-efficient, highly scalable system.

\vspace{0.5em}Section~\ref{sec:SystemArchitecture} presents a high-level picture of \name{}'s architecture,
while Section~\ref{sec:systemprimitives} goes into more detail on a selected
number of important primitives, such as micropayment channels, erasure coding, and data preparation. Sections~\ref{sec:AuditAndIncentiveCompatibility} and~\ref{sec:Economics} discuss \name{}'s hybrid audit scheme and
economic incentives, respectively. Finally, Section~\ref{sec:ComputeUseCases}
presents some future use-cases for \name{}.

\section{System Architecture}
\label{sec:SystemArchitecture}

\name{} is comprised of four service layers: the client SDK, the RPC node layer, the storage provider (SP) layer, and the coordination layer. Figure~\ref{fig:systemdiagram} shows an example system diagram.

Clients store their data using the SDK. It prepares data for storage and transfer it to an RPC node, which manages the write process. Clients can also set up micropayment channels for reading data from \name{} using the SDK.

Data transfers and end-user payments all occur at the RPC node layer. RPC nodes encode and disperse
user data to the storage node layer during writes and gather and decode
data during reads. 

The SP layer is responsible for storing data on behalf of users. SPs also conduct peer-to-peer audits of data. Storage
providers are built with very large amounts of local storage. \name{} provides application-level
data integrity, so SP operators do not need to provision RAID arrays
or other data integrity solutions.

Finally, \name{} coordinates internal state via the Aptos blockchain. The \name{} smart contract
manages several functions on behalf of the system: SP participation, data
placement and lifecycle, cryptoeconomic security rewards and punishments, as well as payment flows.

\subsection{User Data: \userfile{}s, \sectionuserfile{}s, \spunitofstorage{}s, and \proofforaudit{}s}

\begin{itemize}
    \item \textbf{\userfile{}s} are arbitrary-sized Binary Large Objects such as images or videos.
    \item \textbf{\sectionuserfile{}s} are fixed-size portions of a \userfile{}, roughly \sectionuserfilesize{}.\footnote{Files that are smaller than roughly \sectionuserfilesize{} are
zero-padded. As a result, very small files incur overheads, and we expect
clients will store them together to lower those overheads.}
    \item \textbf{\spunitofstorage{}s} are fixed-size (roughly \spunitofstoragesize{}) portions of a \sectionuserfile{}, formed by erasure-coding.
    \item A \textbf{\proofforaudit} is a small (around \proofforauditsize{}), fixed-size portion of a \spunitofstorage{} for use in auditing.
\end{itemize}

The Client SDK reads and writes \userfile{}s, but also allows byte range reads for efficiency.
Internally, these \userfile{}s are
partitioned and specially prepared for storage on the set of storage providers. Users pay to store data
for a certain duration and \userfile{}s stored in \name{} are immutable.

\subsection{Client SDK}

Clients use the SDK to perform functions such as establishing and managing payment channels, preparing data for upload, and writing and reading data from \name{}. The Client
SDK contains a CLI and libraries able to integrate into user-facing dApps.

\paragraph{Writing data.} When writing data, the \name{} Client SDK erasure codes the \userfile{} and computes cryptographic commitments on the encoded data. The SDK then
stores \userfile{} metadata on the coordination layer by calling smart contract functions. The user pays to store the \userfile{} for a specific duration. Once the coordination layer has stored the metadata and transferred payment, the SDK 
contacts an RPC node to transfer data. Once the RPC node finishes the write procedure, it changes the \userfile{} metadata on the coordination layer, marking it stored
in \name{} and ready for reading.

\paragraph{Reading data.} The first step for reading from \name{} is to use the Client SDK to establish a micropayment channel with the RPC node. Thereafter, clients sign micropayment transactions in the channel mixed with data reads from the RPC node.

\subsection{RPC Nodes}

RPC nodes are the gateway to the \name{} storage system. They offer an API which, in concert with the Client SDK, enables clients to write and read data. Furthermore, the RPC nodes call functions on the \name{} smart contract on the coordination layer to read and write \userfile{} metadata. Finally, RPC nodes call the \name{} SP API to write and read \spunitofstorage{}s of data from the storage providers.

\paragraph{Writing data.} During the \userfile{} write procedure, the RPC node checks cryptographic commitments in the \name{} metadata and sends the encoded \spunitofstorage{}s to the SP layer. The specific SP nodes are
assigned by the smart contract. After the SPs have acknowledged the data is received, the RPC node uses the coordination layer to mark the \userfile{} ready for reading.

\paragraph{Reading data.} RPC nodes pay for reading data from SPs. When joining the network, RPC nodes establish payment channels
to the SP layer. RPC nodes read \spunitofstorage{}s from SPs and
reconstruct the requested range of \userfile{} data. Since \userfile{}s are cryptographically committed, attempts to alter data will be detected.

\begin{figure}
  \includegraphics[width=\linewidth]{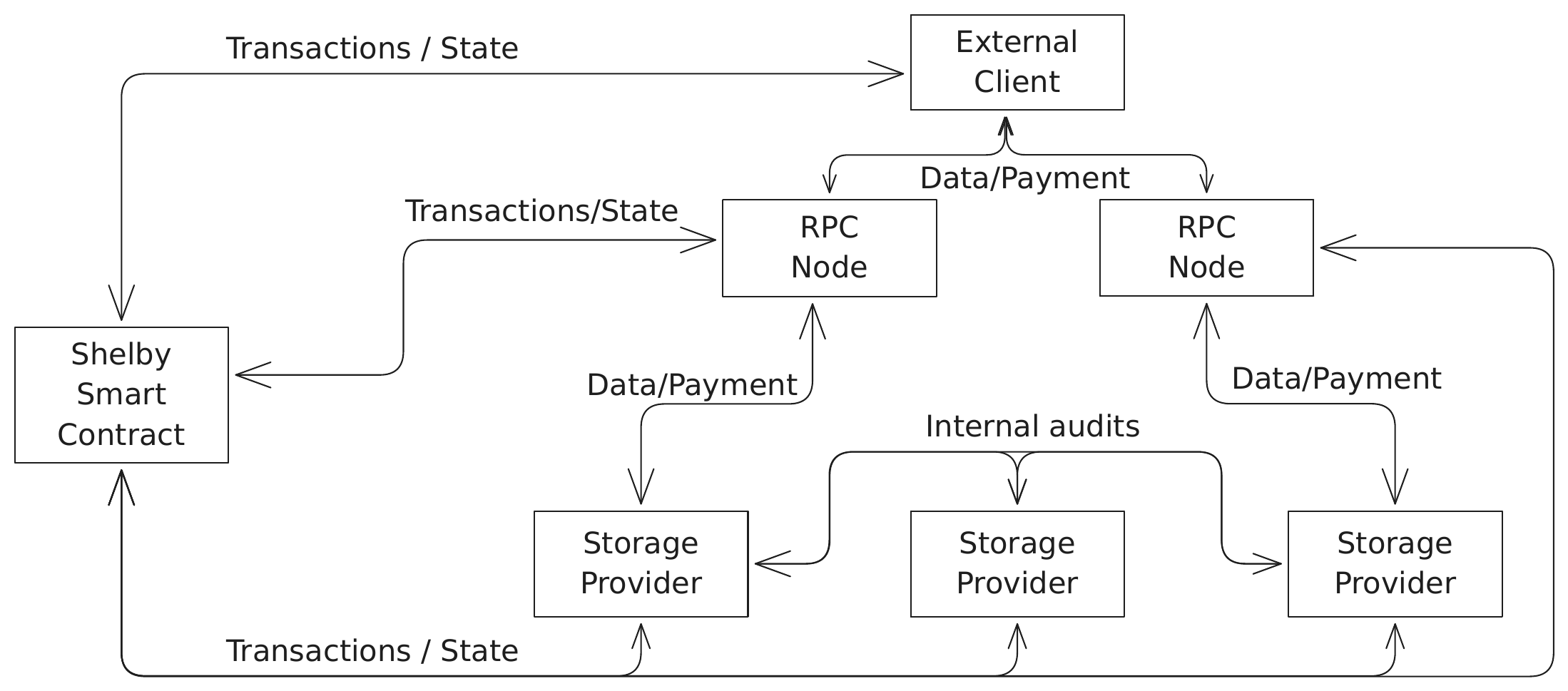}
  \caption{System Diagram}
  \label{fig:systemdiagram}
\end{figure}

\subsection{Storage Providers}

Storage providers are the core of \name{}. An SP stores data on behalf of users and audits other storage providers to ensure data durability and availability. Similar to RPC nodes, SPs use the coordination layer to follow \name{} metadata and call smart contraction functions.

\paragraph{Writing data.} During data writes, SPs track the state of the \name{} smart contract to ensure they know what data \spunitofstorage{}s are assigned to them.

\paragraph{Reading data.} During data reads, SPs respond to paid requests for data by returning the requested \spunitofstorage{} and cryptographic commitment.

\paragraph{Auditing data.} Periodically, SPs are required to conduct a
series of audits. As auditor, an SP requests a \proofforaudit{} and
cryptographic commitment from an auditee storage provider. The auditee responds with the requested \proofforaudit{} and the proof of data. Auditors store proofs locally in case their audit records are selected for verification by the \name{} smart contract.
Audit records that are selected for verification by the \name{} smart contract are submitted directly to the \name{} smart contract using the coordination layer. More details are given in Section~\ref{sec:AuditAndIncentiveCompatibility}.

\subsection{\name{} Smart Contract}

Deployed on the Aptos blockchain, the \name{} smart contract assigns \spunitofstorage{}s to SPs during the \userfile{} write procedure. It manages \userfile{} metadata including lifecycle and accepts payments from users. It sets audit schedules for data and SPs. And ultimately, the smart contract contains the logic to punish bad actors in the system when they have provably acted in a byzantine fashion (either by crashing, censoring, or lying about data).

\paragraph{Writing data.} During writes, \name{} a client submits payment
and a succinct cryptographic commitment of their \userfile{} to the smart contract. The smart contract randomly assigns
\spunitofstorage{}s to SPs, and this mapping is stored in the \userfile{} metadata. Once the \spunitofstorage{}s are accepted by the SPs, the \userfile{} is marked ready by the RPC node. After this point in time, the data is durably stored and performantly retrievable.

\paragraph{Reading data.} \name{}'s high throughput data reads do not require any fine-grained coordination from the blockchain layer. Rather, micropayment channels allow for scalable system performance with loose global coordination, a key technique of high performance computing. Clients can also use micropayment channels with the RPC node layer, although RPC nodes may choose to support many payment mechanisms to encourage \name{} usage.

\section{System Details}
\label{sec:systemprimitives}

This section presents some of the low-level system details used to
construct \name{}. While not required for the presentation of the high-level
system architecture in Section~\ref{sec:SystemArchitecture}, these details
enable \name{} to hit the desired system characteristics of low cost 
and high performance, durability, and availability.

\subsection{Dedicated Network Layer}
\label{sec:dznetwork}

Web2 competitive price-performance requires Web2 physical infrastructure. Web2 providers use private, dedicated high-performance networks for efficient data transfer at reasonable cost. Similarly, \name{} uses a dedicated high-performance network layer to achieve price-performance not possible for Web3 operating over the public internet. 
\name{} uses this layer
to achieve a Byzantine fault tolerant system without requiring application-layer reliable 
broadcast mechanisms over the public Internet. DoubleZero~\cite{doublezero} recently emerged as a decentralized option for a suitable network layer.

\subsection{Micropayment Channels}
\label{sec:paymentchannels}

\name{} is designed for high-throughput, low-latency data serving. As such,
RPC nodes and storage providers (SPs) cannot mix on-chain payments while
serving data. There are many off-chain payment options for end users
(credit cards, monthly subscriptions, etc.), and RPC nodes can offer the best
different payment mechanism for their user community.

As a default, \name{} offers micropayment channels~\cite{wikicontracts} 
as a payment mechanism between clients and RPC nodes
and RPC nodes and SPs. Micropayment channels enable
off-chain transfers to occur in an optimistic manner allowing high
scalability. Since on-chain transactions are required only for creation and settlement, micropayment channels also allow for small payments with low overhead cost.

While the Aptos blockchain offers industry-leading low-latency consensus, micropayment channels provide an additional performance and cost advantage for high-frequency, peer-to-peer payments. Because they bypass consensus entirely, micropayment channels support payments to be freely mixed with data reads, which can be an essential feature when small reads must be extremely cheap to support responsive, usage-based incentives.

Very briefly, a basic unidirectional micropayment channel between a client and a server
requires funds to be transferred from the
client to a multi-signature account signed by both parties. The server also sends the client a refund transaction which the client can use to regain its funds after
a certain amount of time. The client makes payments by sending new refund transactions
with a slightly smaller refund to the client and a slightly earlier allowed settlement time. 
If either party stops cooperating, the most recently
signed refund transaction can be used to settle the channel before the other refund transactions become valid.

Micropayment channels are not perfectly trustless, but uncooperative parties
will quickly stop being paid for invalid service. Hence, the expected value at risk is
small.

\subsection{Erasure Coding}
\label{sec:erasurecoding}
\label{sec:SLAs}

\name{} uses Clay codes~\cite{claycode} (Coupled-Layer codes) to prepare
user data for storage on SPs. The choice
of an erasure coding scheme impacts system performance characteristics
such as write and read speed, data storage durability and availability.
Web2 storage systems are attractive to users because they offer high durability (data has an extremely low likelihood of data loss)
and availability (data can be read \textit{right now}) at reasonable costs and performance levels.
For \name{} to achieve the same levels of data storage durability and availability
requires an erasure coding scheme fit for the task. (\cite{codesurvey} surveys modern erasure codes.)


Erasure-coding schemes divide data into $k$ blocks and encode it into $n$ blocks (where $n>k$) to add redundancy. Storage systems then store the $n$ erasure-coded
blocks onto different nodes.
The ideal erasure coding scheme for a storage system minimizes the bandwidth and the storage for a given durability. That is, the ideal coding scheme is Minimum Storage Regenerating~\cite{regencode}\cite{claycode}\cite{Rashmi2015Cake} (MSR) and Maximum Distance Separable~\cite{jin_new_2024} (MDS). Clay codes are a practical family of codes with both properties. During data repair, they exhibit 60\% less bandwidth usage compared to Reed-Solomon~\cite{reedsol} codes. Clay codes also maintain the MDS property where any $k$ out of $n$ total erasure-coded portionare sufficient
to fully decode the data. Real-world storage systems such as Ceph~\cite{weil2006ceph} have performance benchmarks that demonstrate the 
advantages of Clay codes.

Clay codes are ideal for \name{} because they balance performance, durability,
and availability characteristics:
\begin{itemize}
  \item \textbf{Computationally efficient}: Clay codes have efficient encoding and decoding, avoiding excessive CPU time as a performance degrading bottleneck.
  \item \textbf{High performance for target data size}: \name{} is designed to work well for \userfile{}s of at least \sectionuserfilesize{}. Clay codes work best when the data being coded is sufficiently large to be divided into the required number of sub-units without excessive overhead. As such, \name{} has good
  storage efficiency for the targeted data size.
  \item \textbf{Flexible Repair Structure}: Clay codes can achieve optimal repair band provide optimal repair efficiency when using a specific ordering of repair actions. When the optimal repair pattern cannot be followed, \name{} can fall back to the MDS property (where any $k$ chunks can recover data) even if it must temporarily sacrifice repair bandwidth efficiency.
  \item \textbf{Repair Coordination}: \name{}'s coordination layer allows planning for bandwidth-optimal recoveries, a key feature of Clay codes.
\end{itemize}

By using Clay codes to store data, \name{} offers 99.999999999\% (11 nines) data durability, and 99.9\% availability. Derivations of the
durability and availability of \name{} are given in Appendix~\ref{sec:erasurecodingappendix}.

\subsection{Cryptographic Commitments}
\label{sec:datacommitments}

\name{} uses a Vector Commitment (VC) scheme to enable data verification. In~\cite{catalano13}, the authors state
``VCs allow to commit to an ordered sequence of $q$ values $(m_1, \ldots, m_q)$ in such a way
that one can later open the commitment at specific positions (e.g., prove that $m_i$ is the 
$i$-th committed message)."
In \name{}, we use Merkle trees~\cite{merkle1987digital} as the VC scheme to bind the encoded \spunitofstorage{}s
to a root hash. Thereafter, responses to requests for data include the data validity proof.

Data integrity is a critically important feature for a storage system. Subject to standard cryptographic caveats (i.e., that no known computing method
can efficiently find hash collisions), \textbf{it is impossible for components of \name{} to alter user data without being detected.}

The choice of VC also allows for efficient auditing of SPs.
The
root hash of the Merkle tree is stored in the \userfile{} metadata using the \name{} smart contract.
Thereafter, \proofforaudit{}s and inclusion proofs are used to verify the data by auditors.
The computational efficiency of a Merkle tree inclusion proof allows
for audits to be computed on-chain. Section~\ref{sec:AuditAndIncentiveCompatibility} has more details on the hybrid auditing scheme.


\subsection{Engineering for High Performance}

\name{} is built using the experience and engineering techniques that underpin
the High Performance Compute (HPC) storage system used by Jump Trading Group. HPC engineering
ensures that the system components are optimized to an equally
high degree. Amdahl's law requires that for an
entire computation to be optimal, all parts of it must
be optimal. For distributed systems, this optimization
process can leverage a multitude of technologies at
each layer of the software application stack:

\begin{itemize}
    \item \textbf{Network:} The use of multicast
allows for efficient broadcast of data among the system
nodes. The application layer need not concern itself
with duplicating data, as the network switch hardware
is responsible for ensuring the data is delivered correctly
to all members of the multicast group.

\item \textbf{Packet processing:} Kernel-bypass networking (e.g. DPDK~\cite{dpdk}) allows for Linux systems
to process packets at much higher rates than through
traditional kernel-based techniques. Meanwhile,
the eXpress Data Path~\cite{xdp} (XDP) in the Linux kernel
offers high performance packet processing without
needing to fully bypass the kernel. These techniques
can be used to break through bottlenecks caused by
packet processing overheads.

\item \textbf{Storage:} \name{} SPs will seek to maximize their profit.
As a result, many or most SPs will use
spinning hard drives. While hard drives
are not known for extremely high performance,
careful planning of data placement and ordering
of hard drive seeks to reduce seek time help
maximize performance.

\item \textbf{CPU:} Modern CPUs have high core counts and, as such, act
as tiny distributed systems with internal networks~\cite{kumar2002network}. As such, these
networks can become congested if dataflow is not
carefully designed. Further, the physical design of high core count CPUs requires
tradeoffs to be made around memory latency. On-chip cache and main system memory exhibit different latency characteristics due to Non-Uniform Memory Access (NUMA). 
In particular, multi-socket motherboards are incapable of
offering the same latency to main memory from all sockets,
due to the speed of light in Printed Circuit Board (PCB)
wire traces.
For all these reasons, topology-aware
application design is thus an important concept
to consider for HPC engineering.

\item \textbf{Memory efficiency:} Avoiding unnecessary overheads for data sharing
is important for achieving full CPU utilization. The
use of lock-free data structures that avoid false sharing
of CPU cache enables task parallel processing to
make optimal use of on-chip resources.

\item \textbf{Data parallelism:} Modern CPUs have significant on-chip
resources devoted to data parallel computation. By designing applications
for data parallelism from the beginning, the full usage of vectorized
instructions unlocks the power of modern CPUs.

\item \textbf{Erasure coding acceleration:}
Modern CPUs have accelerated instructions for the Galois field modular arithmetic required to perform erasure coding, and there are advanced algorithms~\cite{taffet23} which can attain encode/decode rates that are in excess of the bandwidths available on network cards.

\item \textbf{\name{}-internal communication:} Distributed systems comprised of individually optimized components
need optimal inter-node communication patterns as well.
HPC techniques ensure that dataflows avoid local bottlenecks
which compromise global system performance. High performance systems use ``request hedging," in which requests are sent to all servers that can satisfy the request.
The first response is used, and the remaining requests are canceled and in-flight responses are ignored.
This approach wastes some amount of resources, but can prevent temporary hotspots from dramatically affecting the tail latency of a system.

\end{itemize}

The use of HPC principles in \name{} 
enables read performance to scale with the available hardware. The throughput of the system
is dictated by the aggregate bandwidth of the RPC node layer.

\subsection{Data Preparation Example}
\label{sec:datapreparation}

Figure~\ref{fig:dataprep} shows an example data preparation of
a user file. The Client SDK partitions the \userfile{} into
\sectionuserfile{}s (if the \userfile{} is larger than a single
\sectionuserfile{}). Since it is unlikely the \userfile{} will
be partitioned perfectly into the
\sectionuserfilesize{}s, the final \sectionuserfile{}
is zero-padded up to the correct length. Then, each
\sectionuserfile{} is erasure coded with Clay codes, which forms some number of
\spunitofstorage{}s. Each of those \spunitofstorage{}s
is then cryptographically committed, as is 
the full \userfile{}.
All of these cryptographic commitments are included
in the transaction the Client SDK commits to the Aptos blockchain
during the write procedure.

\begin{figure}
  \includegraphics[width=\linewidth]{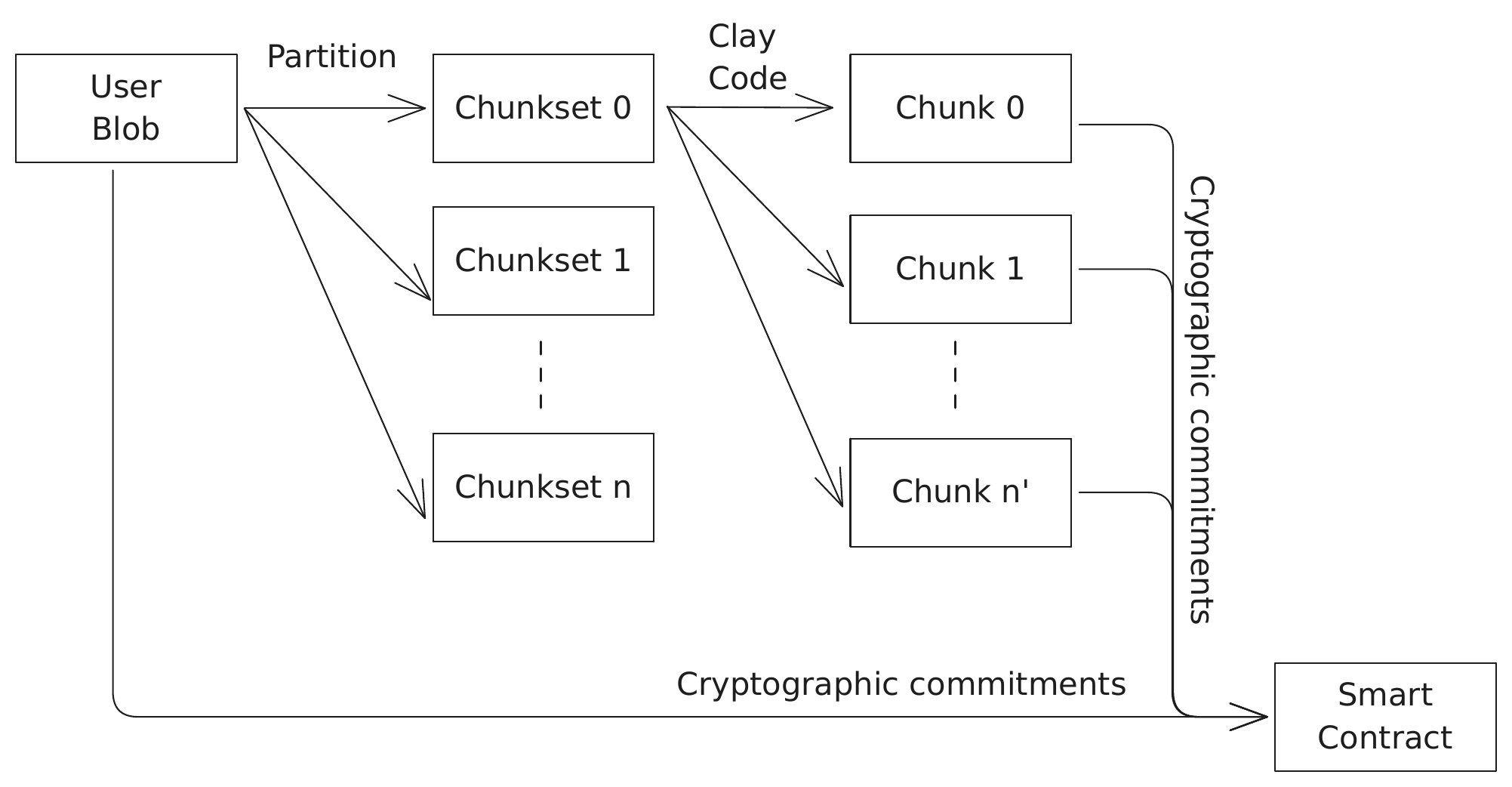}
  \caption{Data preparation by the Client SDK}
  \label{fig:dataprep}
\end{figure}

\section{Audit and Incentive Compatibility}
\label{sec:AuditAndIncentiveCompatibility}

\paragraph{Overview.}
The audit subsystem in \name{} is designed to ensure that storage providers (SPs) reliably store the data they have committed to, and to verify that peer auditing activity (i.e., SPs auditing each other) is conducted honestly. The protocol follows a hybrid design that combines high-volume, low-cost \textit{internal audits} with low-frequency \textit{on-chain audits} that also serve to audit the auditors. This architecture enables strong guarantees for both Byzantine Fault Tolerance (BFT) and Incentive Compatibility (IC), while keeping cost and performance overheads tractable at scale.

In \name{}, \textit{internal audits} are lightweight, peer-to-peer checks performed off-chain, while \textit{on-chain audits} are verifiable challenges enforced by the blockchain smart contract.
Ths hybrid audit design is motivated by the complementary limitations of purely on-chain and purely internal auditing. On the one hand, on-chain audits provide strong cryptoeconomic guarantees, but they are inherently constrained by gas costs, bandwidth requirements, and the system-wide consensus overhead of executing audits through the Aptos validator set. These constraints make it impractical to run on-chain audits at the volume needed to enforce good behavior at scale. On the other hand, internal auditing mechanisms, while cheap and performant, are vulnerable to rational deviations. Without an external enforcement layer, SPs can adopt collusive strategies - such as universally reporting success or ignoring audits entirely - that maximize local utility while undermining system integrity. \name{}'s hybrid design seeks to reconcile these trade-offs by using internal audits to detect most misbehavior and on-chain audits to impose accountability where it matters.

All actively punitive actions in the system - such as slashing misbehaving SPs or penalizing false audit reports - are carried out on-chain, which protects SP minorities from being slashed by a majority. The audits procedure is divided into predetermined audit epochs. At the end of each audit epoch, the blockchain performs two key roles:
\begin{enumerate}
  \item It audits SPs whose behavior appears suspicious based on internal audit outcomes, issuing randomized challenges to verify storage correctness.
  \item It audits the auditors by requiring them to reproduce a random subset of the audit proofs they claimed to have verified, thereby validating the integrity of internal audit attestations.
\end{enumerate}

These on-chain audits are designed to be sparse but unpredictable, serving as a forensic tool rather than a primary enforcement mechanism. By applying external scrutiny only to a strategically chosen subset of actors and actions, the system achieves robustness under standard Byzantine Fault Tolerance and Game Theory assumptions, while preserving scalability and economic efficiency.

\subsection{Internal Auditing Protocol}

Internal audits in \name{} are designed to operate at high frequency and low cost, enabling continuous verification of SPs behavior without incurring the overhead of consensus-layer operations. The mechanism relies on publicly verifiable randomness from the Aptos blockchain to assign storage challenges to SPs throughout each audit epoch. Each challenge corresponds to a small, randomly selected portion of a stored \spunitofstorage{} - typically a \proofforauditsize{} \proofforaudit{} -- whose possession must be demonstrated by the responsible SP.

Upon receiving a challenge, the designated SP generates a succinct cryptographic proof of possession (e.g., a Merkle proof) and broadcasts this proof to the rest of the network%
\footnote{In practice, broadcast is implemented using scalable multicast protocols (e.g., to subcommittees), but for simplicity we describe the mechanism in terms of full broadcast.}.
All SPs maintain a local \textit{scoreboard}, which records whether each peer has responded correctly to its assigned challenges. These scoreboards serve as the basis for reward distribution and act as the primary input for determining which SPs are subject to further on-chain audits.

To ensure accountability and auditability, each SP retains all audit proofs it receives for the duration of two audit epochs. At the end of each epoch, SPs publish a compressed version of their scoreboard to the blockchain. Each scoreboard is comprised of $(n-1)$ bit vectors, where $n$ is the number of SPs and the length of the $i$'s vector is the number of audits that SP $i$ was challenged with during the epoch. In typical operation, these bit vectors are dominated by successful audits and thus exhibit high regularity - either sparsity or uniformity - making them amenable to efficient compression. This allows the full scoreboard to be submitted on-chain with modest bandwidth and gas costs.

Incentives for participating in internal audits are structured as follows. Rewards for the audit epoch are divided into two pools:
\begin{itemize}
  \item \textbf{Storage rewards}, which are proportional to the volume of data stored and scaled by the provider’s audit score. The audit score reflects how consistently the SP has responded correctly to internal audit challenges during the epoch.
  \item \textbf{Auditor rewards}, which provide a small, fixed payment for each successful audit reported by the SP. These rewards are independent of how much data the SP stores. 
\end{itemize}

Let $score_i\in [0,1]$ denote the audit score for SP~$i$, derived from the set of bit vectors submitted by its peers. To mitigate the influence of Byzantine or otherwise erratic behavior, the highest and lowest third of evaluations are discarded prior to aggregation - a standard resilience technique under the assumption that at most one-third of nodes may be faulty.
The remaining evaluations are aggregated (e.g., via majority vote or weighted average) to compute $score_i$. The allocated storage rewards are then scaled by $score_i$. Additionally, if SP~$i$ was assigned $a_i$ audits as an auditor and successfully verified $s_i$ of them, it receives $s_i \cdot \textit{rwd}_{\textit{au}}$ in audit rewards,  where $\textit{rwd}_{\textit{au}}$ denotes the fixed reward for each successfully reported audit (i.e., a `1' entry in the scoreboard).

This structure positively incentivizes SPs to (a) store their assigned data and (b) perform audits faithfully and in a timely fashion. Importantly, it does not create a zero-sum game among SPs: one provider’s underperformance does not automatically increase rewards for others. The system rewards independent correctness rather than relative rank.

\subsection{On-Chain Auditing of the Auditors}

The on-chain auditing mechanism in \name{} serves as a low-frequency, high-integrity enforcement layer to ensure honest participation in internal audits. It targets both storage providers (SPs) suspected of underperformance as well as verifying auditors correct reporting. While the majority of auditing activity and scoring is conducted off-chain, \emph{actively punitive measures} - such as slashing - are triggered on-chain and rely on cryptographic evidence. These slashing events may originate either from on-chain audit challenges or from SPs submitting verifiable evidence of protocol violations encountered during internal audits.

At the end of each audit epoch, the Aptos blockchain uses its native randomness to assign two types of on-chain audit challenges:
\begin{enumerate}
    \item \textbf{Auditee audits}: SPs with low audit scores are subject to direct on-chain challenges testing their storage integrity. For an SP with score $\textit{score}_i \in [0,1]$, the number of challenges assigned is $(1 - \textit{score}_i^2) \cdot C$, where $C$ is a configurable system parameter. This quadratic penalty scheme ensures that well-performing providers are audited infrequently, while those with poor records face increasing scrutiny.
    
    \item \textbf{Auditor audits}: SPs are challenged to reproduce audit proofs corresponding to a randomly selected subset of the $1$ entries in their published scoreboards. Each such entry asserts that the SP verified a peer's proof during the internal audit phase. Failure to produce a valid and matching proof results in slashing for false reporting.
\end{enumerate}

Auditee challenges are constructed by sampling random \proofforaudit s from the SP’s declared holdings, and each challenge requires a cryptographic proof of possession - typically a Merkle membership proof. These proofs are submitted on-chain and verified by a smart contract. Auditor audits are symmetric in structure: the challenged auditor must submit the audit proof it previously claimed to have verified and stored. The contract checks consistency against the scoreboard and validates the cryptographic correctness of the proof.

In addition to these scheduled on-chain audits, SPs are also permitted to submit evidence of invalid audit responses observed during internal operation. If an SP detects a malformed or provably incorrect audit proof from a peer, it may post this evidence on-chain to initiate slashing. In such cases, the reporting SP receives a portion of the slashed funds as a reward, aligning incentives for active monitoring and detection. Moreover, since honest participation requires only local computation and storage, the cost of compliance remains low relative to the risk and penalty of deviation.

All on-chain audits are probabilistic and non-deterministic from the perspective of SPs. This unpredictability, coupled with the possibility of peer-initiated slashing, ensures that rational SPs have strong incentives to store data, retain audit proofs, and participate honestly in internal audits. Because enforcement relies on objectively verifiable data and does not depend on subjective peer consensus, the protocol avoids vulnerabilities associated with collusion or false majorities.

This mechanism disrupts incentive-incompatible equilibria in which SPs collude or neglect their responsibilities. Even if the likelihood of on-chain auditing is low, the cost of failing such an audit - or being reported for provable misbehavior - exceeds the cost of honest participation, making the protocol robust under standard rational and Byzantine assumptions.

\subsection{Byzantine Fault Tolerance of Audit Score Computation}
\label{sec:BFTCorrectness}

The goal of internal auditing in \name{} is to produce an audit score for each SP that accurately reflects its storage behavior, despite the presence of faulty or Byzantine nodes. This subsection outlines the main intuition for BFT correctness. 

A standard BFT model assumes a partially synchronous network and a set of $n$ SPs, of which at most $f < n/3$ may behave arbitrarily (i.e., be Byzantine).  
Correct SPs follow the internal audit protocol as specified.

\paragraph{BFT Intuition.}
Recall that at the end of each audit epoch, the audit score for SP~$j$ is computed from the scores it received from its peers by trimming the highest $f$ and lowest $f$ peer evaluations (inside a peer evaluation a missing `1' equals a `0'). The remaining scores are then aggregated (e.g., via majority or averaging) to obtain a normalized score $\textit{score}_j \in [0,1]$.
Since at most $f < n/3$ SPs are faulty, trimming the top and bottom thirds of evaluations guarantees that the aggregated score is within the range of the highest and lowest honest scores given to $j$.
Honest SPs report audit outcomes based on actual verification, and network noise is modeled within the faulty set.  
Thus, for a correct SP~$j$, the computed score $score_j$ will closely approximate its true success rate in responding to audit challenges. \ 
Faulty SPs, by contrast, will tend to have persistently lower scores, either due to audit failures, missing responses, or detection of incorrect proofs.

\paragraph{Summary.}
Under standard BFT assumptions, the internal auditing mechanism ensures that:
\begin{itemize}
    \item Honest SPs receive audit scores close to $1$, reflecting faithful storage behavior.
    \item Faulty SPs receive scores reflecting their true audit failures, and cannot inflate their success rates.
\end{itemize}

\subsection{Game-Theoretic Incentive Compatibility}
\label{sec:GMIC}
Establishing incentive compatibility (IC) in decentralized storage systems is nontrivial under standard rationality assumptions.
Adopting a more cautious approach requires a list of modeling assumptions
and select key justifications for the incentive compatibility arguments. Additional details with proof sketches are in \Cref{app:IC}.

\paragraph{Rationality.}
Storage providers (SPs) are assumed to be expected-utility maximizers. Each SP selects its strategy in order to maximize expected payoff, accounting for all rewards, penalties, and operational costs associated with the protocol.

\paragraph{Audit Verifiability.}
Given an audit proof, its correctness (i.e., whether it corresponds to a valid \proofforaudit{} of the committed \spunitofstorage) can be verified efficiently and unambiguously by auditors and by on-chain smart contracts.
In \name{}, Merkle proofs are used as the underlying verifiable structure.

\paragraph{Audit Proofs Imply Persistent Storage.}\label{ass}
An SP producing a valid audit proof for a challenged \spunitofstorage{} retains the corresponding data locally, except with negligible probability. This assumption is justified by the following lemma.

\begin{restatable}{lemma}{LEMstorageIncentives}[Economic Incentives for Persistent Storage]
\label{lemma:storage-incentives}
Let $c_s$ be the per-epoch cost of storing a chunk, $c_r$ the cost of retrieving it externally and $p_a$ the per-epoch probability of an audit.%
\footnote{Conservatively assume that a misbehaving SP always succeeds in performing a timely external retrieval. This strengthens the result since, in practice, the misbehaving SP is also taking the considerable risk of failing to externally retrieve the data in a timely manner.}
If \
$p_a \cdot c_r \ge c_s$,
then the rational strategy for an SP is to retain its assigned \spunitofstorage{} locally rather than deleting it and attempting on-demand retrieval.
\end{restatable}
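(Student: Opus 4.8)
The plan is to model the SP's choice for a single assigned \spunitofstorage{} over a single audit epoch as a decision between two pure strategies — \emph{Retain} (keep the chunk stored locally for the epoch) and \emph{Delete} (discard it and fetch it externally only when it is actually needed) — and to compare their expected per-epoch costs. Intermediate options (storing only a fraction of the chunk, or time-sharing Retain and Delete within the epoch) contribute costs that interpolate linearly between these two extremes, so under a linear cost model the optimum is always attained at one of the endpoints; hence it suffices to show that Retain weakly dominates Delete whenever $p_a \cdot c_r \ge c_s$. Crucially, the comparison will not depend on what the other SPs do — the audit is triggered by blockchain randomness and read demand is exogenous — so this yields a (weakly) dominant strategy, not merely an equilibrium strategy.

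First I would compute the per-epoch cost of Retain: the SP pays the storage cost $c_s$, and whenever an audit is issued it answers from the local copy at no extra cost and with no slashing exposure on this chunk, so $\mathbb{E}[\text{cost}_{\text{Retain}}] = c_s$. Next I would lower-bound the per-epoch cost of Delete. Invoking the conservative footnote assumption that a misbehaving SP always succeeds in a timely external retrieval, we may zero out the slashing penalty; conditioned on being audited (probability $p_a$) the SP must pay the external retrieval cost $c_r$, so its expected audit-driven cost is $p_a \cdot c_r$. Every remaining term in the SP's utility is either identical across the two strategies (block/storage rewards for data it is still nominally assigned, fixed operating costs) or strictly worse under Delete (foregone or delayed paid-read revenue, the real slashing risk we have conservatively set to zero, and degradation of the audit score $\textit{score}_i$). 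Dropping all of these only biases the comparison against Retain, giving $\mathbb{E}[\text{cost}_{\text{Delete}}] \ge p_a \cdot c_r$.

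Combining the two estimates with the hypothesis yields $\mathbb{E}[\text{cost}_{\text{Retain}}] = c_s \le p_a \cdot c_r \le \mathbb{E}[\text{cost}_{\text{Delete}}]$, so an expected-utility-maximizing SP retains the chunk; when $p_a c_r > c_s$ the domination is strict. I would close with a short remark that, once the omitted slashing-on-failure risk and read revenue are reinstated, Retain stays optimal for a range of parameters with $p_a c_r < c_s$, so the stated inequality is a conservative sufficient condition rather than a tight threshold.

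The work here is careful bookkeeping rather than a deep argument, and the one place that needs genuine care is exactly that bookkeeping: I must argue cleanly that read revenue, the conservatively ignored slashing penalty, and any feedback through $\textit{score}_i$ all push the Delete cost in the same (upward) direction, so that the two-term comparison $c_s$ versus $p_a c_r$ really is a valid bound on the Delete-minus-Retain cost gap. A secondary subtlety is the treatment of multiple audits within one epoch: if $p_a$ is read as the probability of at least one audit, then a single retrieval (cached for the rest of the epoch) still suffices and $p_a c_r$ remains an upper bound on Delete's retrieval cost, so nothing changes; if audits are instead counted individually, one simply restates the hypothesis with the expected number of per-epoch audits in place of $p_a$.
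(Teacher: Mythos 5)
Your proposal is correct and takes essentially the same approach as the paper: compare the per-epoch expected cost $c_s$ of retaining against the expected retrieval cost $p_a\cdot c_r$ of deleting and conclude that retention weakly dominates when $p_a c_r \ge c_s$. The paper's proof is a three-line version of this; your additional bookkeeping (convexity of intermediate strategies, sign of the omitted read-revenue and slashing terms, the ``at least one audit'' reading of $p_a$) is sound but goes beyond what the paper records.
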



Numerical estimates (see \Cref{sec:RealityBasedIncentives}) suggest that with realistic storage and retrieval costs, and under reasonable audit frequencies, the above inequality is satisfied with a substantial margin.

\paragraph{Reward/Penalty Recap.}
For convenience, briefly recall the relevant reward and penalty structures:\\[0.5ex]
    \textit{Storage rewards:} Proportional to stored volume, scaled by the SP’s audit score.\\[0.2ex]
    \textit{Auditor rewards:} An SP earns a reward $\textit{rwd}_{\textit{au}} > 0$ for each successful audit reported.\\[0.2ex]
    \textit{Audit-the-auditor penalties:} A 1-entry in a scoreboard is independently selected for verification w.p.~$p_{\textit{ata}} > 0$. Failure to produce the corresponding audit proof results in a slashing penalty $S_{\textit{ata}}$.\\[0.2ex]
    \textit{Slashing rewards:} SPs who submit verifiable proofs of peer misbehavior (e.g., invalid proofs) are rewarded with $r_{\textit{slash}} > 0$.\\[1ex]
\indent Denote by $\pi$ an audit-challenge proof consisting of (i) the challenged \proofforaudit{} (a randomly selected \proofforauditsize{} \proofforaudit{}) and (ii) a succinct cryptographic proof verifying its correctness. $p_{\textit{inv}}(\pi)$ is defined as the probability that the proof $\pi$ is invalid. To incentivize auditors to ensure that $p_{\textit{inv}}(\pi) < \varepsilon$ for a small threshold $\varepsilon > 0$, the slashing penalty $S_{\textit{ata}}$ should satisfy:
\[
    S_{\textit{ata}} \geq \frac{\textit{rwd}_{\textit{au}}}{p_{\textit{ata}} \cdot \varepsilon}.
\]

\begin{definition}[Honest Strategy]
\label{def:HonestStrategy}
Let $\sigma_i^{\textit{honest}}$ denote the strategy of storage provider $i$ during an audit epoch, consisting of two roles:

\begin{itemize}
    \item \textbf{Auditee behavior:}
    \begin{itemize}
        \item \textbf{Persistent Storage:} Store all assigned \spunitofstorage{}s faithfully throughout the epoch.
        \item \textbf{Audit Proof Generation:} Upon receiving an audit challenge, compute the proof $\pi_i$ and broadcast $\pi_i$ to peers (or the assigned subcommittee).
    \end{itemize}
    
    \item \textbf{Auditor behavior:}
    \begin{itemize}
        \item \textbf{Audit Verification and Reporting:} For each received proof $\pi_j$:
        \begin{itemize}
            \item Record success (`1') in the scoreboard if
            $
            p_{\textit{inv}}(\pi_j) \leq \varepsilon,$ 
            where $\varepsilon > 0$ is the given threshold; otherwise record failure (`0').
            \item Retain $\pi_j$ for audit-the-auditor verification.
        \end{itemize}
        \item \textbf{Slashing Proof Submission:} Submit invalid proofs $\pi_j$ on-chain as slashing evidence.
        \item \textbf{Publishing Scoreboard:} At the end of the epoch, publish the complete peer audit scoreboard.
    \end{itemize}
\end{itemize}
\end{definition}

There are three important IC properties that \name's Auditing scheme satisfies.

\paragraph{(1) Nash Equilibrium of Honest Strategy.} 
\name{} is IC in the most basic sense, as captured by \Cref{thm:honest-nash} below.

\begin{restatable}{theorem}{honestNash}[Incentive Compatibility of the Honest Strategy Profile]
\label{thm:honest-nash}
Let $\vec{\sigma}^{\textit{honest}} = (\sigma_1^{\textit{honest}}, \ldots, \sigma_n^{\textit{honest}})$ denote the strategy profile where all SPs follow the honest strategy.
$\vec{\sigma}^{\textit{honest}}$ constitutes a Nash equilibrium.
\end{restatable}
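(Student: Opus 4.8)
The plan is to verify the Nash equilibrium condition directly: fix an arbitrary SP $i$, assume all other SPs play $\sigma_j^{\textit{honest}}$, and show that no unilateral deviation by $i$ strictly increases its expected payoff. Since the honest strategy decomposes into an auditee component and an auditor component, and the reward/penalty terms affecting each are (to first order) separable — storage rewards and auditee on-chain audits depend on $i$'s storage and proof-generation behavior, while auditor rewards and audit-the-auditor penalties depend on $i$'s verification/reporting behavior — I would analyze the two components in turn and then argue that no cross-component deviation helps either.

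First, the auditee side. By Lemma~\ref{lemma:storage-incentives}, under the stated cost inequality $p_a \cdot c_r \ge c_s$ it is already rational to retain each assigned \spunitofstorage{} locally, so deleting data is not a profitable deviation. Given that the data is retained, responding to an internal audit challenge with the correct proof $\pi_i$ costs only local computation and a broadcast, and it strictly raises $i$'s expected payoff: a recorded `1' from honest peers increases $score_i$ (hence storage rewards, which are proportional to stored volume scaled by $score_i$), and it reduces the expected number $(1-score_i^2)\cdot C$ of on-chain auditee challenges. Failing or withholding a response can only lower $score_i$ and can never be compensated, since the honest peers' evaluations are based on actual verification and (by the BFT trimming argument of \Cref{sec:BFTCorrectness}) $i$ cannot inflate its own score by controlling a minority of evaluations — indeed $i$ controls none of the peer evaluations of itself. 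So the honest auditee behavior is a best response.

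Second, the auditor side. For each proof $\pi_j$ that $i$ receives, honest reporting records `1' iff $p_{\textit{inv}}(\pi_j)\le\varepsilon$. Reporting `1' earns $\textit{rwd}_{\textit{au}}>0$, but with probability $p_{\textit{ata}}$ that entry is selected for audit-the-auditor verification, and if $\pi_j$ was in fact invalid, $i$ is slashed $S_{\textit{ata}}$. The expected gain from a truthful `1' on a valid proof is $+\textit{rwd}_{\textit{au}}$ with no slashing exposure; a truthful `0' on an invalid proof forgoes the reward but avoids the penalty and additionally lets $i$ collect $r_{\textit{slash}}>0$ by posting the invalid proof on-chain. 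A deviation to reporting `1' on an invalid proof has expected payoff $\textit{rwd}_{\textit{au}} - p_{\textit{ata}}\cdot\varepsilon\cdot S_{\textit{ata}}$ at the threshold, which is $\le 0$ precisely because $S_{\textit{ata}} \ge \textit{rwd}_{\textit{au}}/(p_{\textit{ata}}\varepsilon)$; a deviation to reporting `0' on a valid proof strictly loses $\textit{rwd}_{\textit{au}}$; and simply ignoring audits forgoes all auditor rewards. Retaining $\pi_j$ for two epochs is costless relative to the penalty risk, so honest auditor behavior is a best response. Finally I would note that publishing the scoreboard is required for $i$ to receive its storage rewards at all, so suppressing it is dominated, and that there is no cross-component interaction that changes these conclusions because $i$'s auditee payoff terms and auditor payoff terms are affected by disjoint sets of $i$'s actions. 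Combining the two components, $\sigma_i^{\textit{honest}}$ is a best response to $\vec{\sigma}_{-i}^{\textit{honest}}$, and since $i$ was arbitrary, $\vec{\sigma}^{\textit{honest}}$ is a Nash equilibrium.

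The main obstacle I anticipate is making the separability argument airtight: one must rule out combined deviations where, e.g., $i$ deletes some data (accepting a lower $score_i$) but compensates by over-reporting successes as an auditor, or where $i$'s report affects another SP's score in a way that loops back to $i$'s own rewards. The paper's own remark that the reward structure "does not create a zero-sum game among SPs" is exactly the hook here — one provider's score does not mechanically feed into another's rewards — so I would formalize that the map from $i$'s action profile to $i$'s expected payoff is a sum of terms each monotone in the corresponding honest action, which forces the coordinatewise honest choice to be the global maximizer. A secondary subtlety is the $p_{\textit{inv}}(\pi_j)$ threshold behavior at exactly $\varepsilon$ and the fact that Lemma~\ref{lemma:storage-incentives} is an "except with negligible probability" statement, so the equilibrium is really an $\epsilon$-Nash / almost-sure statement; I would state the theorem's conclusion with that caveat or absorb the negligible terms into the strict inequalities above.
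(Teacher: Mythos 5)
Your proposal takes essentially the same route as the paper's proof sketch: decompose the honest strategy into its auditee and auditor components, invoke Lemma~\ref{lemma:storage-incentives} for persistent storage, argue that correct audit responses dominate via the score-linked storage rewards and reduced on-chain scrutiny, and argue that the calibration $S_{\textit{ata}} \ge \textit{rwd}_{\textit{au}}/(p_{\textit{ata}}\varepsilon)$ makes the threshold reporting rule a best response. The two caveats you flag at the end --- that the cross-component separability needs to be made airtight, and that the ``except with negligible probability'' clause really yields only an $\epsilon$-Nash statement --- are genuine gaps that the paper's sketch also leaves implicit; they are worth noting but do not change the fact that you and the paper are running the same argument.
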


\paragraph{(2) Mutual Dishonesty is Not an Equilibrium.}
Systems that rely solely on internal audits suffer from a failure mode where all SPs report universal success without performing actual verification, and potentially without storing any data.  
In these systems, ``mutual dishonesty" often forms a stable equilibrium.
In \name{}, the audit-the-auditor mechanism prevents this failure mode by introducing a risk of on-chain slashing for falsely reported audits.

\begin{restatable}{theorem}{mutualDishonestyNotNash}[Mutual Dishonesty is Not a Nash Equilibrium]
\label{thm:mutual-dishonesty-not-nash}
Suppose every SP follows the mutual dishonesty strategy: storing nothing and reporting success (`1') for all audits without requiring proofs from the colluding peers.  
This strategy profile is not a Nash equilibrium.
\end{restatable}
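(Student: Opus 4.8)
The plan is to show that the mutual dishonesty profile $\vec{\sigma}^{\textit{MD}}$ --- every SP stores nothing and records a `1' for every audit it is assigned in the auditor role, never collecting the corresponding proof --- admits a strictly profitable unilateral deviation for some SP; by definition this rules out a Nash equilibrium. Fix an arbitrary SP~$i$ and condition on the overwhelmingly likely event that the epoch's blockchain randomness assigns $i$ at least one audit as an auditor; write $a_i \ge 1$ for the total number of such assignments. Under $\vec{\sigma}^{\textit{MD}}$ these become $a_i$ `1'-entries in $i$'s published scoreboard, none of which $i$ can back with a stored \proofforaudit{}.

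First I would isolate the payoff terms that the deviation moves and argue the remainder is inert. The movable terms are exactly (i) the auditor reward $+\,a_i\cdot\textit{rwd}_{\textit{au}}$, and (ii) the audit-the-auditor penalty: each of the $a_i$ unsupported `1'-entries is independently selected for on-chain verification with probability $p_{\textit{ata}}$ and, lacking the proof, fails, for an expected loss of $a_i\cdot p_{\textit{ata}}\cdot S_{\textit{ata}}$. Every other term is insensitive to $i$'s auditor behavior: $i$'s storage reward is scaled by $\textit{score}_i$, which here is aggregated from the (dishonest) all-`1' reports that $i$'s peers publish about $i$ regardless of $i$'s own actions, so $\textit{score}_i=1$ is fixed; and because the scheme is not zero-sum (see \Cref{sec:AuditAndIncentiveCompatibility}), altering $i$'s own reports about its peers cannot lower $i$'s payoff through any feedback channel.

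Next I would take the deviation $\sigma_i'$ that is identical to $\sigma_i^{\textit{MD}}$ except that $i$ records `0' for every audit it is assigned as auditor (equivalently, $i$ runs the auditor side of the honest protocol, which in this profile yields an all-`0' scoreboard since no peer supplies a valid \proofforaudit{}). This forfeits term (i) but zeroes out term (ii), leaving every other component of $i$'s expected utility unchanged, so
\[
u_i(\sigma_i', \vec{\sigma}^{\textit{MD}}_{-i}) - u_i(\vec{\sigma}^{\textit{MD}}) \;=\; a_i\bigl(p_{\textit{ata}}\,S_{\textit{ata}} - \textit{rwd}_{\textit{au}}\bigr).
\]
Finally I would plug in the parameter constraint from the Reward/Penalty Recap, $S_{\textit{ata}} \ge \textit{rwd}_{\textit{au}}/(p_{\textit{ata}}\,\varepsilon)$, which gives $p_{\textit{ata}}\,S_{\textit{ata}} \ge \textit{rwd}_{\textit{au}}/\varepsilon > \textit{rwd}_{\textit{au}}$ for the intended small threshold $\varepsilon<1$; hence the difference is at least $a_i\,\textit{rwd}_{\textit{au}}\,(1/\varepsilon - 1) > 0$, so $\sigma_i'$ strictly improves $i$'s payoff and $\vec{\sigma}^{\textit{MD}}$ is not a Nash equilibrium. (An even more lucrative deviation additionally posts the peers' provably malformed responses on-chain to collect slashing rewards $r_{\textit{slash}}$, but this is not needed.)

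I expect the only delicate part to be the accounting in the second step: one must argue carefully that \emph{every} payoff component other than (i)--(ii) is genuinely unaffected --- in particular that $i$'s storage reward stays at full value because its score is pinned to $1$ by dishonest peers, and that the non-zero-sum structure prevents any hidden coupling through peers' scores or through peer-slashing --- and one must dispatch the degenerate case $a_i = 0$ by noting it has negligible probability (or, in that case, using the symmetric auditee-side deviation instead). Once that bookkeeping is nailed down, the conclusion is a one-line consequence of the stated choice of $S_{\textit{ata}}$.
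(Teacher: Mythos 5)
Your proposal is correct and follows essentially the same line as the paper's proof sketch: both identify that the per-`1' expected payoff is $\textit{rwd}_{\textit{au}} - p_{\textit{ata}} \cdot S_{\textit{ata}}$, which is strictly negative by the calibration of $S_{\textit{ata}}$, and hence reporting `0' (or any strategy that abstains from false `1's) is a strictly profitable unilateral deviation. Your version is somewhat more careful than the paper's sketch in that it explicitly constructs the deviating strategy, argues that $i$'s score and all other payoff components are held fixed (so the comparison really is only over terms (i) and (ii)), and dispatches the $a_i = 0$ corner case; these are good additions but do not change the substance of the argument.
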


\paragraph{(3) The Honest Equilibrium is Coalition Resistant.}
Beyond individual incentive compatibility, \name{}'s auditing protocol also exhibits the desired property of robustness against coalitions.  
Specifically, no coalition of up to $f$ SPs can significantly improve their total utility by jointly deviating from honest behavior.

\begin{restatable}{theorem}{coalitionResistance}[$\varepsilon$-Coalition Resistance]
\label{thm:coalition-resistance}
Let $C \subseteq \{1, \dotsc, n\}$ be a coalition of SPs with $|C| \leq f$.
Suppose all SPs outside $C$ follow the honest strategy $\sigma^{\textit{honest}}$.
Then, under the protocol’s incentive model, any joint deviation by $C$ improves the coalition’s aggregate expected utility by at most an $\epsilon > 0$ margin, where $\epsilon$ is negligible relative to standard reward and penalty scales.
\end{restatable}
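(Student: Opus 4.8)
The plan is to treat the coalition $C$ as a single pooled agent and decompose its aggregate expected utility into the separate streams fixed by the incentive model — storage rewards, auditor rewards ($\textit{rwd}_{\textit{au}}$ per reported success), audit-the-auditor penalties ($S_{\textit{ata}}$ with probability $p_{\textit{ata}}$ per `1' entry), peer-initiated slashing rewards ($r_{\textit{slash}}$), and operational costs (storage $c_s$, external retrieval $c_r$, and verification) — then to bound, stream by stream, how much any joint deviation of $C$ can gain over $\vec{\sigma}^{\textit{honest}}$, and finally to recombine the bounds. The starting point is the trimmed-aggregation robustness already established in \Cref{sec:BFTCorrectness}: since $|C|\le f$ and every SP is evaluated by $n-1$ peers of which at least $n-1-f$ are honest, discarding the top $f$ and bottom $f$ peer evaluations leaves only values lying within the honest range. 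Hence (i) $C$ cannot inflate the audit score of a coalition member above the success rate honest peers actually observed for it, and (ii) $C$ cannot deflate the score of any honest SP outside $C$. Property (ii), together with the fact that the reward structure is non-zero-sum (one SP's underperformance does not increase another's share), already rules out the strategy of damaging outsiders in order to relatively benefit $C$.

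Next I would fix $m\in C$ and compare its contribution to the coalition's payoff under the deviation with its contribution under $\sigma_m^{\textit{honest}}$. For the storage-reward stream, property (i) says the only way for $m$ to keep $\textit{score}_m$ near $1$ is to actually retain its assigned \spunitofstorage{}s and answer internal challenges; dropping data lowers $\textit{score}_m$, which both reduces $m$'s share of the storage-reward pool and increases the expected count $(1-\textit{score}_m^2)\cdot C$ of on-chain auditee challenges that $m$ will fail, each triggering slashing. By Audit Verifiability and the collision-resistance of the Merkle commitments (\Cref{sec:datacommitments}), a missing \proofforaudit{} cannot be faked except with negligible probability, so under the hypothesis $p_a\cdot c_r \ge c_s$ of \Cref{lemma:storage-incentives} the expected penalty for under-storing strictly exceeds the saved cost; the accounting here is the per-member version of the single-player argument behind \Cref{thm:honest-nash}, the only new feature being that $C$ may coordinate cross-reports, which property (i) has already neutralized.

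For the auditor-reward stream, the relevant collusive move is for coalition members to stamp `1' for one another without exchanging (or without retaining) valid proofs, harvesting $\textit{rwd}_{\textit{au}}$ per entry at zero verification or retrieval cost. But each such `1' is re-checked on-chain with probability $p_{\textit{ata}}$, and the calibration $S_{\textit{ata}}\ge \textit{rwd}_{\textit{au}}/(p_{\textit{ata}}\varepsilon)$ makes the expected slashing loss at least $\textit{rwd}_{\textit{au}}/\varepsilon$ per fabricated entry, which dominates the gain; the only safe way to collect $\textit{rwd}_{\textit{au}}$ is to genuinely verify and retain proofs, i.e.\ honest auditor behavior. For the slashing stream, on-chain evidence is objectively verifiable, so no member can manufacture a slashing reward against an honest outsider (honest proofs verify), and a member slashing another member is at best a zero-sum internal transfer — strictly negative for $C$ if any fraction of the slash is burned rather than redistributed. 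Summing the per-member, per-stream bounds over $m\in C$ then shows the coalition's aggregate expected utility under any joint deviation exceeds its honest value by at most $\epsilon$, where $\epsilon$ absorbs the negligible cryptographic-failure probabilities and the residual $O(|C|\cdot\varepsilon\cdot\textit{rwd}_{\textit{au}})$ slack coming from the $\varepsilon$-threshold in the definition of a ``successful'' audit.

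I expect the main obstacle to be the bookkeeping for mixed-role deviations inside the coalition — for instance, a subset of $C$ that stores honestly and legitimately audits the remaining members while those members free-ride on storage — because such strategies create internal transfers (auditor rewards paid on proofs exchanged inside $C$, partial slashes redistributed inside $C$) that are invisible to a naive stream-by-stream comparison unless handled with care. The way I would close this is exactly the pooled-agent viewpoint announced at the outset: every internal transfer is matched by an equal-or-greater internal cost or an external penalty, so the only coalition-level degrees of freedom that survive are the ones already bounded, and the trimmed-aggregation robustness (properties (i) and (ii)) is what prevents any internal reshuffling from purchasing an external score advantage. Establishing that this pooled-budget invariant genuinely covers every deviation pattern, rather than an enumerated few, is the crux of the argument.
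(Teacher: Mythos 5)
Your proposal takes essentially the same route as the paper's proof sketch: you invoke the trimmed-aggregation robustness from $f<n/3$ to show that a coalition can neither inflate its own members' audit scores nor deflate honest outsiders', you observe that coalition members still face unpredictable on-chain auditee challenges and audit-the-auditor checks, and you use the calibration $S_{\textit{ata}}\ge\textit{rwd}_{\textit{au}}/(p_{\textit{ata}}\varepsilon)$ to close off fabricated `1' entries. The paper's sketch is a short bullet list; your per-stream, per-member decomposition (storage rewards, auditor rewards, slashing) is a more systematic packaging of the same inequalities, and your explicit acknowledgement that mixed-role deviations inside the coalition need pooled-agent bookkeeping is a genuine refinement of what the paper leaves implicit.

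There is one point where you slightly mischaracterize the source of the residual $\epsilon$. You assert that ``the only safe way to collect $\textit{rwd}_{\textit{au}}$ is to genuinely verify and retain proofs,'' which — if taken literally — would imply the coalition has no safe shortcut at all and hence $\epsilon=0$. The paper's picture is subtler: the audit-the-auditor check only requires the auditor to \emph{reproduce} (i.e.\ have retained) the proof, not to have verified it. A coalition whose members actually do store their data can therefore exchange and retain proofs but skip the verification computation on within-coalition proofs — a ``blind acceptance'' shortcut that is entirely safe under the ATA check and is precisely the small internal-efficiency saving the paper identifies as the $\epsilon$. Your $O(|C|\cdot\varepsilon\cdot\textit{rwd}_{\textit{au}})$ term, derived from slack at the $\varepsilon$-confidence threshold, is a different (and smaller-order) contribution; the dominant term in the paper's $\epsilon$ is the skipped verification cost, not threshold slack. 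This does not break your argument — you already flag operational verification cost as a stream in your opening decomposition — but the sentence claiming verification is \emph{required} for safety should be weakened to ``retention is required,'' and the resulting verification-cost saving should be surfaced as the main driver of $\epsilon$ rather than hidden in the error term.
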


\section{Economic Opportunity}
\label{sec:Economics}

This section details the core economic factors underpinning \name{}'s storage and retrieval model. The main cost components of providing decentralized storage are physical storage, bandwidth, durability maintenance, and system coordination. These factors are carefully managed to ensure that \name{} delivers read and write performance suited to high-demand use cases, while preserving key Web3 properties such as user ownership and control. The cost structures guide the setting of system parameters - including storage rewards, read pricing, and slashing penalties - to ensure the network remains economically sustainable for both storage providers (SPs) and RPC nodes.
SPs earn rewards for storing user blobs and serving read requests; RPCs earn fees for acting as the interface between users and the network. These rewards are calibrated to reflect real-world costs, ensuring that participation is attractive for both roles.

\subsection{Write Economics}

Storing data incurs several key costs. Physical storage hardware and ingress bandwidth are fundamental contributors~\cite{AWS,GCP,backblaze,jiang2008disks}. Orchestrating the distribution of data also introduces coordination overheads, while long-term durability requires continuous repair of failed storage nodes or disks~\cite{repairreference,rashmi2013solution,Rashmi2015Cake}. The \name{} storage coding scheme (\Cref{sec:erasurecoding}) directly addresses these challenges by achieving low replication overhead and minimal repair bandwidth, without compromising fault tolerance. This is essential to ensure reliability in any distributed storage system. In a \textit{decentralized} setting, however, ensuring reliability requires an additional cost-auditing-to cryptographically verify that data remains intact and available over time. While many decentralized networks face difficult trade-offs here-often sacrificing either cost efficiency or cryptoeconomic guarantees-\name{} is engineered to maintain strong security and decentralization properties with minimal cost overhead.

The user's storage fee is $W$ per GB per month. This payment is allocated between \textit{storage rewards} ($\textit{rwd}_{\textit{st}}$ per GB) and \textit{auditor rewards} ($\textit{rwd}_{\textit{au}}$ per successful audit). To minimize payment friction, rewards are accumulated and disbursed at the end of audit epochs, ensuring that payment handling costs are negligible in aggregate. Since $\textit{rwd}_{\textit{st}}$ is a monthly rate per GB and $\textit{rwd}_{\textit{au}}$ is per audit, allocating between them requires normalization: if a GB of user data is subject to $n_a$ audits per month (on average), then the payout satisfies $\textit{rwd}_{\textit{st}} + n_a \cdot \textit{rwd}_{\textit{au}} = W$. Here, $n_a$ is derived by multiplying 3 terms (i) the expected number of chunks selected for audit in a single epoch from that a user GB ($p_a$ times number of chunks in a GB), (ii) the number of auditors per audit and (iii) the number of audit epochs per month. The parameters $\textit{rwd}_{\textit{st}}$, $\textit{rwd}_{\textit{au}}$, and $p_a$ are tuned to ensure that SPs maintain healthy profit margins while the system remains secure. Thanks to \name{}'s hybrid auditing scheme and real-world cost benchmarking (\Cref{sec:RealityBasedIncentives}), there is ample flexibility to set these parameters effectively.

Many existing decentralized storage protocols suffer from chronic underutilization: vast amounts of storage are committed and rewarded, but rarely accessed or used meaningfully. This disconnect stems from rewarding allocation rather than utility-paying for bytes pledged, not bytes served. In contrast, \name{} ties rewards to actual retention and audit performance, ensuring that incentives reflect real value. By doing so, it avoids over-incentivizing idle storage and aligns provider revenue with system usefulness.

Through careful engineering, \name{} is able to offer write pricing that remains competitive with existing Web3 storage solutions, while delivering substantially better performance. Unlike many decentralized systems that focus on archival or cold storage with minimal read optimization, \name{} is designed for high-throughput, low-latency access, making it suitable for performance-critical applications. Importantly, these performance gains are achieved without sacrificing decentralization or cryptoeconomic guarantees, and at a cost structure that remains within a practical range relative to both Web2 and Web3 alternatives.

\subsection{Read Economics}

Reading from \name{} incurs several key costs. The most significant is egress bandwidth, particularly for data served externally to end users. This is optimized by having RPC nodes decode data (even if redundant) within the dedicated network layer and egress only the minimal required data externally. While further optimization within the internal network is possible and planned, the much higher cost of external (internet-facing) egress makes internal optimizations a secondary priority.

The coding scheme introduces some computational overhead for erasure decoding during reads, but this cost is negligible in both time and money compared to bandwidth and other factors. Payment coordination in a decentralized system can also impose substantial overhead if handled naively. By employing payment channels from RPCs to SPs, \name{} effectively amortizes these costs, enabling micropayments of $\$10^{-9}$-small enough to support seamless, trust-minimized user experiences where only minimal amounts are ever at stake as payments track data delivery in real time.

An often overlooked but crucial cost is maintaining a visible catalog that tracks available data and the metadata required for retrieval. High read performance depends on this catalog being up-to-date and efficiently accessible. \name{} uses the Aptos blockchain to maintain a globally consistent catalog of verified data. However, since reads do not modify the catalog, they do not require direct blockchain interaction. Instead, RPC nodes (or any interested party) maintain a local copy of the catalog, kept current via an Aptos full node. Notably, the same catalog data structure is also used for coordinating the auditing protocol, whose costs are associated with writes-efficiently sharing catalog maintenance costs across system functions.

Reads in \name{} benefit from strong economies of scale. Major costs such as catalog maintenance remain nearly constant regardless of read volume, so the marginal profit per read increases as volume grows. As a result, calibrating the read price $R$ (per GB) is relatively straightforward, relying primarily on predictable, well-understood cost factors. High read volume benefits both the system and content owners, aligning incentives to attract and serve large-scale read traffic with high performance.

\subsection{RPC revenues}
RPC nodes play an important role in \name{} by serving as the main entry point for users, coordinating data retrieval and enhancing the overall user experience.
They require a connection to the private backbone network, an external-facing internet connection to serve user requests, and standard server-grade hardware. Importantly, most of these costs are variable and scale directly with usage-for example, bandwidth and decoding costs increase with the volume of user reads. This tight coupling of costs to revenues ensures that RPCs face minimal financial risk when scaling their operations.

The low cost structure and minimal operational risk allow RPC nodes to adopt flexible pricing models for readers, such as pay-per-use, subscription tiers, or even subsidized models in partnership with content providers. Additionally, RPCs have the opportunity to generate extra revenue by maintaining small caching layers that store pre-decoded versions of frequently accessed (``hot'') data. This not only improves read performance for popular content but also offers RPCs a clear economic incentive to invest in optimizing their service. 
To enable caching that improves overall system performance and user experience, while ensuring SPs receive their fair share of read income, \name{} implements a fee-sharing mechanism that encourages RPCs to cache ``hot data'' when it benefits the network as a whole.

\subsection{Reality Based Incentives}
\label{sec:RealityBasedIncentives}
The following are key parameters used in determining incentive bounds:

\begin{description}[style=multiline,leftmargin=2.5cm,font=\normalfont]
    \item[$\boldsymbol{c_s}$] cost to store a \spunitofstorage{} per epoch.
    \item[$\boldsymbol{c_r}$] cost to retrieve (repair) a \spunitofstorage{}.
    \item[$\boldsymbol{\textit{rwd}_{\textit{st}}}$] storage reward per \spunitofstorage{} per epoch.
    \item[$\boldsymbol{p_a}$] a stored \spunitofstorage{}'s probability of being audited per epoch.
    \item[$\boldsymbol{S_a}$] slashing penalty for audit failure.
    \item[$\boldsymbol{P_{S_a}}$] probability that a missing storage is detected via on-chain sampling.
    \item[$\boldsymbol{\textit{rwd}_{\textit{au}}}$] auditor reward per successful audit.
    \item[$\boldsymbol{p_{\textit{ata}}}$] probability of audit-the-auditor verification.
    \item[$\boldsymbol{S_{\textit{ata}}}$] slashing penalty for audit-the-auditor failure.
\end{description}

The following inequalities ensure that rational SPs behave honestly:

\paragraph{1. Incentive to participate at all.}
\[
\textit{rwd}_{\textit{st}} \ge c_s.
\]
This guarantees that the expected additional storage reward covers the expected cost of storing over time.

\paragraph{2. Incentive to store data rather than simulate it via retrieval.}
\[
p_a \cdot c_r \ge c_s.
\]
According to \Cref{lemma:storage-incentives} this ensures that a rational SP stores the data instead of retrieving on audit.

\paragraph{3. Slashing to avoid fake storage.}
To prevent an SP from faking a \texttt{prct\_fake} fraction of its total committed storage (\texttt{total\_committed}), the following should hold:
\[
P_{S_a} \cdot S_a > (1 - p_a) \cdot \textit{rwd}_{\textit{st}} \cdot \texttt{prct\_fake} \cdot \texttt{total\_committed},
\]
where $P_{S_a}$ accounts for the probability of detection through on-chain sampling. The left-hand side of the inequality has the expected penalty from faking storage while the right-hand side has the expected reward from faking the storage.
Calculating the on-chain detection probability $P_{S_a}$ needs the expected on-chain sampled audits, which are based on the SP's score.
The expected score of the SP is $(1 - \text{prct\_fake})$ and the resulting expected on-chain sample size is:
\[
(1 - (1 - \text{prct\_fake})^2) \cdot C,
\]
where $C$ is a constant set by the system (e.g. 50).
Bounding $P_{S_a}$ from below by calculating for sampling without replacement yields
\[
P_{S_a} \ge 1 - (1 - \text{prct\_fake})^{(1 - (1 - \text{prct\_fake})^2) \cdot C}.
\]
For instance, even an SP that only fakes 10\% of its storage, i.e., $\text{prct\_fake}=0.1$ and $C=50$, risks a $P_{S_a} > 0.63$ of getting caught and slashed. This is a strong deterrent.

\paragraph{4. Audit the auditor parameters.}
$\textit{rwd}_{\textit{au}}$ is set to be greater than the cost of verifying and storing the audit proof, and the amortized cost of on-chain posting the scoreboard. These are tiny and are $O(10^{-XXX})$ \$ per audit.
$S_{\textit{ata}}$ and $p_{\textit{ata}}$ are calibrated to satisfy 
\[
S_{\textit{ata}} \geq \frac{\textit{rwd}_{\textit{au}}}{p_{\textit{ata}} \cdot \varepsilon}
\]
where $\varepsilon > 0$ is a protocol-level certainty threshold controlling the confidence level required for an auditor to report a successful audit, which is defined in \Cref{def:HonestStrategy}.

\paragraph{Numerical justification.}
The basic parameters of $c_s$ and $c_r$ are mostly independent of the design, but they do influence the calibration of important design parameters. Notably, the audit frequency $p_a$ is lower bounded based on their ratio. To demonstrate the practicality of \Cref{lemma:storage-incentives} in reality, consider the most widely used storage system  -  AWS\footnote{Data taken from \url{https://aws.amazon.com/s3/pricing/} at May 8 2025. The prices are for the default presented option: S3 Standard, US East (N. Virginia). To be conservative, these are the highest storage price and lowest data transfer price.} - to provide support with real world numbers.
\begin{itemize}
    \item Storage: \$0.023/GB/month $\approx$ \$0.00000077 per day per MB.
    \item Read: \$0.02/GB $\approx$ \$0.00002 per MB.
\end{itemize}
In addition, for a $(k,m)$ erasure code, reconstructing a single \spunitofstorage{} (the specific one the SP should have but does) requires reading $k$ different \spunitofstorage{}s. In \name{}, it will be no less than $k=5$ distinct \spunitofstorage{}s to read. 
For real world prices ratio, retrieval-based strategies are disincentivized if:
\[
p_a \geq \frac{c_s}{c_r} \Leftrightarrow p_a \geq \frac{0.00000077}{5\cdot 0.00002} = \boxed{0.0076}.
\]
where we normalize $p_a$ to be the per-day probability of a chunk being audited.
That is, based on conservative pricing from popular storage services, auditing each chunk with probability at least $0.0076$ per day (i.e., once every $\sim$130 days) suffices to make retrieval-based deviations economically irrational. Since \name{} is designed to perform audits at significantly higher frequencies, \Cref{lemma:storage-incentives} is justified by rational economic behavior in practice. That is, an SP would rather pay for more storage capacity than retrieve the data from external sources.

In a similar vein to the above, we argue that rational Auditors store the audit proofs they receive. 
Roughly, the immediate cost saving from not storing the audit proof is minimal  -  only avoiding a $\sim$1~KiB sample. 
However, if the auditor is later selected for audit-the-auditor verification (with probability $p_{\textit{ata}}$), it must reproduce the corresponding audit proof. 
While the proof itself concerns only a small 1KiB sample, the system enforces chunk-level retrieval granularity: retrieving even a small part of a \spunitofstorage{} requires accessing the entire \spunitofstorage{} (e.g., \spunitofstoragesize{}).  
Thus, without further trust assumptions (e.g., trusting the auditee) the auditor would incur the full bandwidth and retrieval cost of accessing a complete \spunitofstorage{}.
Since the immediate cost saving is tiny compared to the expected retrieval expense, rational auditors are economically deterred from falsely reporting success, even when the audit-the-auditor frequency $p_{\textit{ata}}$ is very low.


\section{Forward-Looking Compute Use Cases}
\label{sec:ComputeUseCases}

According to Statista~\cite{Statista}, 149 zettabytes of data were generated globally in 2024 alone - a figure projected to more than double within five years. While most of this is media, the growth of agentic AI systems is expected to accelerate this trend even further. This exponential increase highlights a need for scalable infrastructure not just for storage, but for high-throughput data access and processing.

\name{} lays the foundation for a decentralized cloud by offering verifiable storage with predictable, low-latency access. This makes it possible to strategically position compute engines-ranging from CPUs to GPUs - within the network, enabling them to operate over large datasets without the bottlenecks typical of decentralized systems. Several forward-looking use cases stand to benefit:

\paragraph{AI and Data Marketplaces.}
Modern AI pipelines, from training to inference, are constrained by data access. \name{} supports storage of model weights, checkpoints, logs, and structured datasets for fine-tuning, retrieval-augmented generation (RAG), and similar tasks. Its read-incentivized model encourages the emergence of curated, high-quality datasets, while predictable performance allows seamless integration with inference layers and vector databases. AI data marketplaces can finally offer one-click, inference-ready access without compromising decentralization.

\paragraph{On-Chain Trading and Strategy Engines.}
As trading infrastructure moves increasingly on-chain, low-latency access to market data, user behavior, and historical execution logs becomes critical. \name{}’s throughput and read performance enable complex strategies such as real-time analytics, dynamic risk modeling, and backtesting. DeFi platforms can build time-sensitive execution logic rivaling TradFi systems-all without relying on centralized data services.

\paragraph{Edge Media Transcode.}
Different tradeoffs apply when serving media with varying levels of popularity. Highly viewed content is stored in multiple formats to minimize playback latency and compute. For less popular files, however, preemptive transcoding is inefficient. An on-demand transcoding service at RPC nodes can offer a cost-effective approach to serving the long tail of cold media content.

\subsection{Technical Paths Forward}
\label{sec:future-compute-tech}

While \name{} is designed as a decentralized storage network optimized for performance and verifiability, its architecture also opens pathways for supporting decentralized compute capabilities in the future. Several potential directions exist, each offering different trade-offs between trust, scalability, and system complexity.

\paragraph{Validator-Based Compute.}  
One straightforward option is to perform computations directly on the Aptos validator set. Since validators already maintain a Byzantine fault-tolerant (BFT) consensus, executing small tasks redundantly across all validators would inherit strong correctness guarantees.  
This approach is simple to implement and leverages existing trust assumptions.  
However, because every validator independently recomputes the task, compute amplification is high, leading to significant inefficiency.  
Validator-based compute is therefore most suitable for lightweight operations where security and simplicity outweigh performance concerns.

\paragraph{Probabilistic BFT Compute via Sampled Committees.}  
A second option is to assign computations to randomly sampled committees of storage providers or specialized compute nodes. Each committee member independently computes the result, and correctness is established via majority arguments.  
This reduces compute amplification substantially compared to full replication across validators.  
While the probabilistic security (depending on sample size and adversarial fraction assumptions) is weaker than full BFT, it remains strong for many practical workloads.  
Sampled-committee designs are increasingly common in modern decentralized systems, such as DFINITY’s Internet Computer and parts of Ethereum’s Danksharding roadmap.

\paragraph{Incentive-Driven Compute with Random Verification.}  
Another approach is to perform compute tasks off-chain and validate them selectively. Computation results would be produced by a storage provider or compute node, and a random subset of results would be verified by independent verifiers.  
If faults are detected, the provider is slashed or penalized.  
This model offers substantial efficiency gains because not every computation needs redundant recomputation.  
However, it introduces more complex economic modeling: slashing penalties must be large enough, and verification sampling rates must be high enough, to deter dishonest behavior.  
This probabilistic verification has been explored in optimistic rollups.

\paragraph{Trusted Execution Environments (TEEs).}  
Another pragmatic option is to delegate computations to hardware-based Trusted Execution Environments, such as NVIDIA Confidential Computing platforms~\cite{H100}.  
In this model, nodes would execute computations inside enclaves, producing attestations that the computation was carried out correctly.  
TEEs reduce compute overhead dramatically and simplify protocol design, since results are trusted based on hardware guarantees.  
However, TEEs introduce new trust assumptions and attack surfaces: vulnerabilities in enclave technology (e.g., side-channel attacks, speculative execution flaws) can undermine guarantees.  
Thus, while TEEs are a practical short-term option, they trade theoretical decentralization for efficiency.

\paragraph{Specialized Verifiable Computation.}  
In cases where the type of computation is known in advance and restricted (e.g., specific classes of aggregation, simple transformations, or domain-specific operations), it becomes feasible to design highly efficient verifiable computation protocols.  
Instead of general-purpose verifiable computation, these systems use custom-tailored SNARKs or succinct ZK proofs for particular computations, greatly reducing prover costs.  
Recent advances in specialized proving systems, such as zk-SNARKs for aggregation, matrix multiplications, or even certain machine learning models (ZKML), suggest that domain-specific verifiable compute could become practical much earlier than fully general approaches.  
In \name{}, enabling certain classes of predictable compute operations over stored data - and verifying them succinctly - offers an attractive middle path between theoretical purity and practical viability.

\paragraph{General Verifiable Computation via Zero-Knowledge Proofs.}  
The most theoretically appealing direction remains fully general verifiable computation, where any arbitrary program's output is accompanied by a succinct cryptographic proof (e.g., SNARKs or STARKs).  
This approach provides the strongest possible guarantees: correctness without replication, additional trust assumptions, or economic incentives.  
However, general-purpose proof generation remains prohibitively expensive today, particularly for large or complex computations.  
While promising progress is being made on zkVMs and proof systems like Halo2 and Nova, these approaches are not yet practical for general deployment at the scale envisioned for \name{}.  
Thus, while general verifiable computation remains a long-term goal, near-term efforts will likely focus on specialized or hybrid approaches.

Each path toward decentralized compute in \name{} carries distinct trade-offs between simplicity, security, efficiency, and decentralization.  
Notably, all paths are enabled and accelerated by \name{}'s high-performance data access, making integrated compute applications both feasible and profitable in decentralized settings.

\newpage

\printbibliography

\appendix

\pagebreak

\section{Durability and Availability in Web3}
\label{sec:erasurecodingappendix}

Users of data storage systems expect durability (data is never lost) and availability (data can be read right now). Unfortunately, due to 
hardware failures, perfect durability and availability is
impossible to achieve.
A simple approach where data is replicated exactly in
many places~\cite{gfs} can also achieve high durability and availability, but the
storage overhead and cost grows as a result.
Web2 storage systems achieve high data durability and availability
at reasonable cost by using erasure coding schemes like Reed-Solomon coding~\cite{reedsol}.
For example, Amazon S3 advertises 11-nines (99.999999999\%) of durability, and 99.99\% availability~\cite{S3}.
This means there is only a 0.000000001\% chance of data loss (annually), and the service is only unavailable for around 5 minutes per month.
For complete coverage of these SLAs, see~\cite{availincloud}.

The likelihood of common hardware failures are~\cite{jiang2008disks}:


\begin{itemize}
    \item \textbf{Drive Failure:} Drives fail at a rate of 2\% per year \cite{backblaze_hdd_stats} \cite{availincloud}. When a drive fails, the data stored \textit{on that specific drive} is considered lost and requires regeneration from other sources.
    \item \textbf{Latent Sector Failures:} Modern, high density hard drives experience sector failures in which a sector is unreadable due to physical issues with the drives. These errors are not typically detected until a read is attempted, and the chances of these errors occurring increase as drives age. A 2010 study \cite{lse} found that around 3.45\% of drives will experience a sector failure in their lifetime. Increasing drive density seems to make the problem worse. When a sector is found to be bad during an audit, any data which is stored by the sector is lost and requires regeneration from other sources.
    \item \textbf{Host Failure:} Hosts fail due to non-drive hardware issues, kernel panics, etc at a rate of around 1\%-5\% per year \cite{availincloud}. When a host fails, the data on its drives it not typically lost, but the time to repair a failed host may be large, so we may chose to rebuild data \textit{from all drives} in the failed host anyway to increase availability. For this reason, a host failure is often treated as a both a durability-impacting event and an availability-impacting event.
    \item \textbf{Rack Failure:} Rack-level events (network switch failure, power distribution issue, misconfiguration) cause temporary unavailability for all hosts in the rack. \name{} assumes a rate of 5\% of racks experiencing such issues per year \cite{cloudfailrate}. Data is assumed to be only inaccessible during the outage, it is \textit{not} lost.
    \item \textbf{Datacenter Failure:} Major site-wide events (large power outage, fiber cut, natural disaster preventing access) cause temporary unavailability. Data is assumed to be only inaccessible during the outage, it is \textit{not} lost. \name{} assumes around 2\% of datacenters will experience an availability issue for more than $\sim$15 minutes annually. Some highly sensitive environments, such as financial hubs and critical internet exchange points, will have multiply-redundant power and networking infrastructure to reduce the probability of failure.
    \item \textbf{Systemic Operational Event:} A large-scale unavailability event, such as critical software bugs, widespread configuration errors, major network disruptions (beyond a single DC), control plane failures, or significant operator error. Assume 1 of these happens per year and mean time to recover (MTTR) is around 30 minutes. 1 outage per year with 30 minute MTTR is an Annualized Failure Rate (AFR) of $\frac{30\text{ minutes}}{525,600\text{ minutes in year}} \approx 0.0057\%$. Data is assumed only inaccessible, not lost. Sometimes, however, outages will be much larger and extremely widespread.
\end{itemize}

A Web3 system of similar scale to a Web2 equivalent has similar challenges. 
Current Web3 storage systems trade performance for availability assuming up to
one-third of the participants are Byzantine. As a result, they require some
combination of:
\begin{itemize}
    \item \textbf{Extremely high levels of replication:} This increases cost of storage dramatically.
    \item \textbf{Sharding \userfile{}s on all nodes:} This limits scalability and performance, because network access, disk overhead, etc. dominate read performance as data shards get smaller.
    \item \textbf{Excessive communication:} This also limits scalability as nodes are added because clients exchange data with one-third of nodes for each data read.
\end{itemize}

The use of auditing and cryptoeconomic security for durability allows \name{} to avoid these performance-robbing drawbacks.

\paragraph{Durability.}
Hardware failure rate and the likelihood of malicious actors bound \name{}'s durability.
The audit scheme (see Section~\ref{sec:AuditAndIncentiveCompatibility}) finds misbehaving nodes (with some probability) and ejects them from the system.
This is a slightly different model than traditional BFT. Rather than assuming these nodes are present, \name{} assumes they are present for some duration and are removed from the storage quorum when detected.
From a durability standpoint, a malicious node is trivial (logically) to detect in  \name{}; the coordination layer asks it to prove it is storing its assigned \spunitofstorage{}s of data.
If the node does not respond or the commitments do not match, \name{} recovers \spunitofstorage{}s from other nodes to ensure durability.


The overall time a \spunitofstorage{} remains vulnerable - from when it's initially lost or corrupted until its replacement is fully rebuilt and secured - is determined by the sum of the time it takes to detect the problem and the time it takes to subsequently rebuild the data.
Let this total vulnerability period be denoted as $T_{\text{critical}} = \text{MTTD} + \text{MTTR}_{\text{rebuild}}$, where MTTD is mean time to detect.

Permanent data loss occurs if too many events cause \spunitofstorage{}s to be lost within the same \sectionuserfile{} during the critical window $T_{\text{critical}}$.

Let $\text{AFR}_{\text{effective}}$ represent the annualized rate at which these initial trigger events (plain-old node crashes or newly detected malicious actions) occur for any given node storing a \spunitofstorage{}.

As a reminder, Clay codes (Section~\ref{sec:erasurecoding}) are an MDS
scheme, meaning they can be modeled as a $(k, m)$ scheme as far as durability
is concerned.
In a $(k, m)$ scheme, $k + m$ total parts are created from source data.
According to the MDS property, any $k$ parts are sufficient to reconstruct the original
data.
Using a $(10, 6)$ scheme, one trigger event will leave 15 nodes holding \spunitofstorage{}s for that \sectionuserfile{}.
Data loss occurs if 6 of these remaining 15 nodes also experience a trigger event within $T_{\text{critical}}$.

Assuming a 12 hour MTTR, nodes have a (very high) 50\% likelihood to delete a \spunitofstorage{}, and it takes 24 hours to detect a \spunitofstorage{} was lost,
the probability of data loss is modeled as:

\begin{align*}
P(\text{data loss}) &\approx (16 \cdot 0.50) \cdot \left[ \binom{15}{6} \left( 0.50 \cdot \frac{24 + 12}{8760} \right)^6 \right] \\ 
                    &\approx 3.01 \cdot 10^{-12}
\end{align*}

Or (simplified) durability of $1 - 3.01 \cdot 10^{-12} = 0.999999999997$ (11 nines).

\paragraph{Availability.} Availability is similarly derived. 
For the same $(10, 6)$ example, data becomes temporarily unavailable for reads only if $m+1=6$ or more \spunitofstorage{}s are simultaneously down.

Availability is always strictly worse than durability. Availability is
affected by other failure modes in addition to the failure modes that affect durability.
In the derivation below, these additional failure modes are systematic error (a software issue which
can be detected and repaired quickly) and datacenter failures.

Assume 5 datacenters with 98\% uptime. Assume one systemic event per year lasting 30 minutes. Assume the EC scheme allows all \sectionuserfile{}s
to be available if any three of the datacenters are operational.

\begin{align*}
P(\text{unavail.}) &\approx P(\text{data loss}) + P(\text{systematic error}) + P(\text{$<3$ DCs online}) \\
&\approx 3.01 \cdot 10^{-12} + \frac{30\text{ min. }}{525,600\text{ min. in year }} + 1 - \left[0.98^5 + \binom{5}{4} \cdot 0.98^4 \cdot 0.02 + \binom{5}{3} \cdot 0.98^3 \cdot 0.02^2 \right] \\
&\approx 3.01 \cdot 10^{-12} + ~0.0000571 + 1 - \left[0.9039208 + 0.0922368 + 0.0037648\right]  \\
&\approx 1.35\cdot 10^{-4}
\end{align*}

Or (simplified) availability of $1 - 1.35 \cdot 10^{-4} = 0.999865$ (3 nines).

Web2 users suffer from the challenge of censorship and have no realistic recourse.
Web3 offers improved censorship resistance.
\name{} disincentivizes 
censorship by storage provider nodes in a few ways. Storage providers
do not control their ability to censor any particular data because
the placement
of \spunitofstorage{}s on storage providers is randomized by the smart contract.
Isolated nodes also do not control
user data delivery. They forgo read payments for no perceivable benefit when
censoring.



\section{Proof Sketches for Incentive Compatibility Claims}
\label{app:IC}

\LEMstorageIncentives*

\begin{proof}
If an SP deletes a chunk and relies on retrieval during an audit, its expected cost per epoch is 
$E\left[ \text{Cost}_{\text{retrieve}}\right] = p_a \cdot c_r$.
By contrast, if the SP retains the chunk, the cost is simply the storage cost $c_s$.
Thus, if $p_a \cdot c_r \ge c_s$, the expected cost of the retrieval-based strategy exceeds the cost of storing, making persistent storage strictly more profitable for rational SPs.
\end{proof}


\honestNash*

\begin{proof}[Proof sketch]
Proving the theorem is organized according to the main components of the honest strategy.\\[0.5ex]
\noindent \textit{Correct response to audits.}
Upon receiving an audit challenge, an SP must choose whether to generate and broadcast a valid audit proof. \
Deviating (e.g., ignoring the challenge or submitting a fake proof) leads to
(i) a failed/zero audit recorded by peers, decreasing the audit score and thus storage rewards, and
(ii) an increased likelihood of targeted on-chain audits.
Thus, correctly responding to audit challenges maximizes expected utility.

\noindent \textit{Persistent storage.} An SP must decide whether to persistently store its assigned \spunitofstorage{}s. \
Based on the positive utility of correctly responding to audit challenges together with \Cref{lemma:storage-incentives}, which implies that answering correctly to audits is best served by storing the data, a rational SP will choose to store its assigned \spunitofstorage{}s.

\noindent \textit{Verification and reporting of peer proofs.}
Upon receiving an audit proof $\pi_j$ from a peer, the SP must decide how to report it. \ 
Based on the calibration of $S_{\textit{ata}}$, reporting `1' yields positive expected utility only if $\pi_j$ is valid w.h.p. ($p_{\textit{inv}}(\pi_j)< \varepsilon$). Reporting slashing evidence is profitable. And, finally, $\textit{rwd}_{\textit{au}}$ is calibrated such that publishing the scoreboard is profitable (especially when dominated by `1's, as is the case here). Thus, following the reporting rule based on certainty threshold $\varepsilon$ maximizes expected utility. \\[1ex]
\indent To conclude, in every situation covered by the honest strategy - storage, responding to audits, verifying peers, and submitting slashing proofs - the prescribed honest action yields the highest expected utility given that others are also honest.
Thus, no SP has an incentive to unilaterally deviate from $\sigma_i^{\textit{honest}}$. That is, $\vec{\sigma}^{\textit{honest}}$ is a Nash equilibrium.
\end{proof}

\mutualDishonestyNotNash*

\begin{proof}[Proof sketch]
Fix SP~$i$ and consider a single reported audit entry.
Because SP~$i$ has not stored the corresponding data, it cannot produce a valid audit proof if challenged.
Each `1'-entry in the scoreboard is selected for audit-the-auditor verification with probability $p_{\textit{ata}} > 0$.
Thus, the expected utility per reported `1' is
$\textit{rwd}_{\textit{au}} - p_{\textit{ata}} \cdot S_{\textit{ata}}$,
where the protocol calibrates $S_{\textit{ata}}$ such that
$p_{\textit{ata}} \cdot S_{\textit{ata}} \gg \textit{rwd}_{\textit{au}}$.

Thus, the expected utility per reported `1' is strictly negative.
Moreover, the losses compound superlinearly with the number of false audit reports.
Therefore, SP~$i$ has a strict incentive to deviate from the mutual dishonesty strategy by not falsely reporting success.
Hence, mutual dishonesty is not a Nash equilibrium.
\end{proof}


\coalitionResistance*

\begin{proof}[Proof Sketch]
Because $f < n/3$, after trimming the highest and lowest $f$ evaluations, audit scores are bounded, from above and below, by honest SPs' reports.  
Thus, internal collusion among $C$ cannot significantly affect storage rewards or shield severe misbehavior (e.g., not storing allocated \spunitofstorage).
Moreover:
\begin{itemize}
    \item SPs in $C$ still face independent random audit challenges and audit-the-auditor checks.
    \item Failure to store assigned data or report audits correctly risks slashing and reward loss.
    \item Trust-based shortcuts (e.g., blind acceptance of peer proofs) may slightly reduce auditing costs within the coalition, but the savings are small compared to the honest strategy rewards and the penalties for misbehavior.
\end{itemize}

Thus, any aggregate gain achievable by $C$ through deviation is bounded by a small $\epsilon$ reflecting minor internal efficiency savings,
and the honest strategy profile is $\epsilon$-coalition resistant up to size $f$.
\end{proof}

\end{document}